\def\ket#1{| #1 \rangle}
\def\bra#1{\langle #1 |}
\def\kb#1#2{|#1\rangle\!\langle #2 |}
\def\bk#1#2{\langle #1 |#2\rangle}
\def\be{\begin{eqnarray}}
\def\ee{\end{eqnarray}}
\def\bee{\begin{eqnarray*}}
\def\eee{\end{eqnarray*}}
\newtheorem{defn}{Definition}
\newtheorem{thm}{Theorem}
\newtheorem{exa}{Example}
\newtheorem{lem}{Lemma}
\newtheorem{rmk}{Remark}
\newtheorem{cor}{Corollary}
\newcommand{\C}{{\mathbb C}}
\newcommand{\M}{{\mathbb M}}
\renewcommand{\H}{{\mathcal H}}
\newcommand{\operp}{$\bigcirc$\kern-.91em{$\perp$}}
\def\rk{{\rm rk}}
\def\be{\begin{eqnarray}}
\def\ee{\end{eqnarray}}
\def\bee{\begin{eqnarray*}}
\def\eee{\end{eqnarray*}}
\def\ot{\otimes}
\begin{document}

\title[Chordal Graphs and Distinguishability of Quantum States]{Chordal Graphs and Distinguishability of Quantum Product States}
\author[D.W.Kribs, C.Mintah, M.Nathanson, R.Pereira]{David W. Kribs$^{1}$, Comfort Mintah$^{2}$, Michael Nathanson$^3$, Rajesh Pereira$^{1}$}

\address{$^1$Department of Mathematics \& Statistics, University of Guelph, Guelph, ON, Canada N1G 2W1}
%\address{$^2$African Institute for Mathematical Sciences, Muizenberg, Cape Town, 7945, South Africa}
\address{$^2$Centre for Education in Mathematics and Computing, University of Waterloo, Waterloo, ON, Canada N2L 3G1}
\address{$^3$Department of Mathematics, Harvard University, Cambridge, MA, USA 02138}

\begin{abstract}
We investigate a graph-theoretic approach to the problem of distinguishing quantum product states in the fundamental quantum communication framework called local operations and classical communication (LOCC). We identify chordality as the key graph structure that drives distinguishability in one-way LOCC, and we derive a one-way LOCC characterization for chordal graphs that establishes a connection with the theory of matrix completions. We also derive minimality conditions on graph parameters that allow for the determination of indistinguishability of states. We present a number of applications and examples built on these results. 
\end{abstract}

\subjclass[2010]{15A83, 18A10, 47L25, 47L90, 81P15, 81P45}

\keywords{quantum state distinguishability, product states, local operations and classical communication, chordal graphs, graph clique cover, simplicial vertex, graph parameters, Gram matrix, positive semidefinite matrix, operator system.}

\maketitle

\section{Introduction}

One of the most important quantum communication frameworks is referred to as local (quantum) operations and classical communication, and is denoted LOCC. It includes many basic quantum communication protocols such as quantum teleportation and data hiding \cite{Teleportation, terhal2001hiding,eggeling2002hiding}. The subclass of LOCC protocols called one-way LOCC, wherein the communicating parties perform their measurements in a prescribed order, still encapsulates many of the important cases and has been studied from the early days of LOCC investigations      \cite{Walgate-2000,Nathanson-2005,fan2004distinguishability,N13,cosentino2013small,yu2012four}. 

A central focal point of the subject, of relevance to many LOCC applications, is the development of techniques to distinguish amongst known sets of quantum states using only one-way LOCC operations. Our work on LOCC quantum state distinguishability began with the application and development of techniques from matrix and operator theory \cite{kribs2017operator,kribsquantum2019,lattice2019}. Most importantly for this work, we recently introduced an orthogonal graph representation approach to the problem of distinguishing sets of quantum product states  \cite{kribs2020vector,kribs2021operator}. 

In this paper, we deepen this investigation with a number of results and applications that further strengthen the connections between graph theory, LOCC, and linear algebra and operator theory. Specifically, we begin by formulating a notion of a distinguishability set and prove its equivalence to certain decompositions of positive semidefinite matrices. We identify chordality as the main graph structure that drives distinguishability, and we establish a characterization of chordal graphs in terms of one-way LOCC distinguishability of bipartite product states. Our key technical device in linking these two notions comes from matrix completion theory. We then turn our focus to decidedly non-chordal graphs, and derive results that allow for determination of indistinguishability of states. We identify minimality conditions in terms of standard graph parameters imply the indistinguishability of sets of states. 
Throughout the work we also present a number of illustrative examples and applications of our main results, including further new results in special cases as consequences, improvements and new proofs for some previous results, and new graph-theoretic perspectives for some seminal examples in the subject. 

This paper is organized as follows. In the next section we include the necessary background material from quantum information and graph theory. Section~3 contains the main results on distinguishability and chordal graphs, and Section~4 flips the focus to indistinguishability and other relevant graph structures. Applications and examples for each part are included as subsections.

\section{Preliminaries}

We use standard quantum information notation, including the Dirac notation of ket's $\ket{\psi}$ for pure states (unit vectors) in a given Hilbert space and bra's $\bra{\psi}$ for the dual vectors. The standard orthonormal basis for $\mathbb{C}^d$ is denoted by $\{\ket{0}, \ket{1}, \ldots , \ket{d-1}\}$, and the corresponding basis for $\mathbb{C}^d \otimes \mathbb{C}^n$ is written in shorthand as $\ket{ij} = \ket{i} \otimes \ket{j}$.  

Our basic setting is a situation in which two parties, called Alice and Bob, each in control of a quantum system represented on finite-dimensional Hilbert spaces $\mathcal H_A$ and $\mathcal H_B$, are physically separated  so that their measurement protocols are restricted to those only using local quantum operations and classical communications (LOCC). Their joint system $\mathcal H_A \otimes \mathcal H_B$ has been prepared in a pure state from a known set of states $\mathcal S = \{ \ket{\psi_i}  \}$, and Alice and Bob would like to determine the value of $i$ only using LOCC. We consider the sub-paradigm called one-way LOCC, wherein Alice can perform a measurement on her system and then communicate (classically) the result to Bob, who can then perform a measurement on his system.  
Mathematically, a one-way LOCC measurement is of the form $\M = \{ A_k \otimes B_{k,j} \}$, with the positive operators making up the measurement outcomes satisfying $\sum_k A_k = I_A$ and $\sum_{j} B_{k,j} = I_B$ for each $k$. If the outcome $A_k \otimes B_{k,j}$ is obtained for any $k$, the conclusion is that the prepared state was $\ket{\psi_j}$. 
More generally, a positive operator valued measure (POVM) is a set of operators $E_k\geq 0$ such that $\sum_k E_k = I$. 

As in \cite{kribs2020vector}, we shall focus on the problem of distinguishing sets of quantum product states, which are tensor products of states on the individual subsystems, $\ket{\psi} = \ket{\psi^A} \otimes \ket{\psi^B}$. It is this class of quantum states that allows for a link with graph theory. See \cite{lovasz1989orthogonal,lovasz2000correction,fallat2007minimum,booth2008minimum,fallat2011variants,barioli2010zero,booth2011minimum,barioli2011minimum,hackney2009linearly,rose1970triangulated}  as entrance points into the literature on graphs and their uses in linear algebra. Here we present the basic notions that will be used throughout the paper, with additional definitions introduced as needed in subsequent sections. 

Let $G = (V,E)$ be a {\it (simple) graph} with vertex set $V$ and edge set $E$. For $v,w\in V$, we write $v \sim w$ if the edge $( v,w ) \in E$.  The {\it complement} of $G$ is the graph $\overline{G} = (V, \overline{E})$, where the edge set $\overline{E}$ consists of all two-element sets from $V$ that are not in $E$. Another graph $G'$ is a {\it subgraph} of $G$, written $G' \leq G$, if $V' \subseteq V$ and $E' \subseteq E$ with $v,w\in V'$ whenever $( v,w ) \in E'$. For a subset of the vertices $V' \subset V$, $G' = (V', E')$ is the {\it induced subgraph} of $G$ on $V'$ when $E' = \{ (v,w) \in E: v,w \in V'\}$. The {\it complete graph} on $n$-vertices, is $K_n = (\{v_1,\dots , v_n    \}, E)$ such that $E = \{ (v,w) : v\neq w \in V \}$.  The {\it cycle graph}  on $n$-vertices is $C_n = (\{v_1,\dots , v_n    \}, E)$ such that $E = \{ (v_i, v_{i+1}) : 1 \leq i \leq n-1 \} \cup \{ (v_n, v_1) \}$. 

A set of graphs $\{ G_i = (V_i, E_i) \}$ {\it covers} a graph $G= (V,E)$ if $V = \cup_i V_i$ and $E = \cup_i E_i$.
A collection of graphs $\{G_i \}$ is a {\it clique cover} for $G$ if $\{G_i \}$ covers $G$ and if each of the $G_i$ is a `clique'; i.e., a complete graph.
The {\it clique cover number} $\mathrm{cc}(G)$ is the smallest possible number of subgraphs contained in a clique cover of $G$. 
Note that a clique cover can be thought of as a collection of (not necessarily disjoint) induced subgraphs of $G$, each of which is a complete graph. It is a cover if every edge is contained in at least one of the cliques. A {\it simplicial vertex} in $G$ is a vertex whose neighbor vertices form a clique; i.e., any two neighbors of a simplicial vertex are adjacent in the graph.  

The following class of graphs is central to our analysis. 

\begin{defn}
A graph $G$ is \emph{chordal} if it possesses no chordless cycles of length four or greater; i.e., all cycles in $G$ of four or more vertices have a cycle chord. Equivalently, every induced cycle of $G$ must have exactly three vertices.  
\end{defn}

The link we make use of between graph theory and Hilbert space structure is the following. 
Given a graph $G= (V,E)$, a function $\phi: V \rightarrow \mathbb{C}^d\backslash \{0\} $ is an {\it orthogonal representation} of $G$ if for all vertices $v_i \ne v_j\in V$,
$v_i \not\sim v_j$ if and only if $\bk{\phi(v_i)}{\phi(v_j)} = 0$. 
For more details on orthogonal representations of graphs we point the reader to \cite{fallat2007minimum,lovasz1989orthogonal,booth2008minimum,booth2011minimum}. 

As introduced in \cite{kribs2020vector}, every set of product states naturally defines orthogonal graph representations as follows. 

\begin{defn}
Let $\mathcal S = \{ \ket{\psi^A_k}\otimes \ket{\psi^B_k} \}_{k=1}^r$ be a set of product states on $\mathcal H_A \otimes \mathcal H_B$. The graph of these states from Alice's perspective is the unique graph $G_A$ with vertex set $V = \{ 1,2, \ldots, r\}$ such that the map $k \mapsto  \ket{\psi^A_k}$ is an orthogonal representation of $G_A$. Likewise, the graph of the states from Bob's perspective is the graph $G_B$ with vertex set $V$ such that $k \mapsto \ket{\psi^B_k}$ is an orthogonal representation of $G_B$.
\end{defn}

Observe that a set of product states are mutually orthogonal precisely when Alice's graph is a subgraph of the complement of Bob's graph; i.e., $G_A \leq \overline{G_B}$.

\section{Distinguishability Sets and Chordal Graphs}\label{section3}

We begin by recalling a pair of notions from matrix theory. 
Let $M$ be a positive definite matrix, then the {\it support} of $M$ is the set  $\{ i: m_{ii}\neq 0\}$.  
Let $\mathcal S=\{ \ket{\phi_i} \}_{i=1}^n$ be a collection of pure states all on a Hilbert space $\mathcal H$, then the {\it Gram matrix} of $\mathcal S$ is the $n$ by $n$ (positive semidefinite) matrix whose $(i,j)$th entry is the inner product $\bk{\phi_i}{\phi_j}$. 

\begin{defn}  \label{def:DistinguishabilitySet}
Let $\mathcal S=\{ \ket{\phi_i} \}_{i=1}^n$ be a collection of pure states on the Hilbert space $\mathcal H$, and let $\{A_k\}_{k=1}^m$ be a collection of subsets of $\mathcal S$.  Then $\{A_k\}_{k=1}^m$ is said to be a {\em distinguishability set} for $\mathcal S$ if there exists a POVM $\{ E_k \}_{k=1}^m$ on $\mathcal H$ with $m$ outcomes with the property that if we perform the measurement on $\ket{\phi}\in \mathcal S$ and we can get outcome number $k$, then  $\ket{\phi}$ must be in $A_k$; i.e., $\ket{\phi_i}\notin A_k$ implies that $E_k \ket{\phi_i}=0$.  
\end{defn}

As an illustrative special case, if the states $\ket{\phi_i}$ are mutually orthogonal and the sets $A_k$ are disjoint with union equal to the set $\{ 1 ,\ldots , n\}$, then $E_k = \sum_{j\in A_k} \kb{\phi_j}{\phi_j}$ defines such a POVM. 
In general, the distinguishability sets of $\mathcal S$ and the Gram matrix of $\mathcal S$ can be related in the following result.

\begin{thm} \label{firstmain} 
Let $\mathcal S=\{ \ket{\phi_i} \}_{i=1}^n$ be a collection of pure states on the Hilbert space $\mathcal{H}$ and let $M$ be the Gram matrix of $\mathcal S$.  Let $\{ T_k\}_{k=1}^m$ be a collection of subsets of $\{ 1,2,...,n\}$ and let $A_k =\{ \ket{\phi_i} \}_{i\in T_k}$ for $1\le j\le m$.  Then the following are equivalent: 
\begin{enumerate}
\item $\{A_k\}_{k=1}^m$ is a distinguishability set for $S$.
\item There exists a set of non-zero $n$ by $n$ positive semidefinite matrices $\{M_k\}_{k=1}^m$ such that $\sum_{k=1}^m M_k=M$ and $M_k$ is supported on $T_k$ for $1\le k\le m$.
    \end{enumerate}
\end{thm}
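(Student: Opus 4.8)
The plan is to prove the two implications separately, with the natural bridge being the "square root" / dilation relationship between a POVM element $E_k$ and a positive semidefinite matrix $M_k$ defined on the span of $\mathcal{S}$. The key observation is that if we write $\ket{\phi_i} = R\ket{e_i}$ for a suitable operator $R$ (so that $M = R^*R$ realizes the Gram matrix, e.g.\ via a synthesis operator from $\mathbb{C}^n$ onto $\mathcal{H}$ with $R\ket{e_i} = \ket{\phi_i}$), then for any positive operator $E$ on $\mathcal{H}$ the matrix $R^*ER$ is an $n\times n$ positive semidefinite matrix whose $(i,j)$ entry is $\bra{\phi_i}E\ket{\phi_j}$, and the support condition "$E\ket{\phi_i}=0$ whenever $i\notin T_k$" translates exactly into "$R^*E R$ is supported on $T_k$." This correspondence, applied to each $E_k$, is what links the two statements.

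For the implication $(1)\Rightarrow(2)$: given the POVM $\{E_k\}$ witnessing the distinguishability set, I would set $M_k := R^* E_k R$. Then $\sum_k M_k = R^*(\sum_k E_k)R = R^*R = M$, each $M_k \geq 0$, and the hypothesis $\ket{\phi_i}\notin A_k \Rightarrow E_k\ket{\phi_i}=0$ forces the $i$th row and column of $M_k$ to vanish for $i\notin T_k$, i.e.\ $M_k$ is supported on $T_k$. The only subtlety is the requirement that the $M_k$ be \emph{nonzero}: if some $R^*E_kR = 0$ then $E_k$ annihilates all of $\mathcal{S}$, and such an outcome can simply be deleted (or merged into another outcome's index set) without affecting the distinguishability-set property, so we may assume all the $M_k$ are nonzero after this pruning; I would remark on this rather than belabor it.

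For the implication $(2)\Rightarrow(1)$: given $\{M_k\}$ with $\sum M_k = M$ and $M_k$ supported on $T_k$, I need to manufacture a POVM $\{E_k\}$ on $\mathcal{H}$ with $E_k\ket{\phi_i}=0$ for $i\notin T_k$. The natural move is to work on the subspace $\mathcal{K} = \operatorname{span}\mathcal{S}$. Since $M = R^*R$ and $M = \sum_k M_k$ with each $M_k\geq 0$, one gets that $\ker M \subseteq \ker M_k$ for each $k$; because $\ker M = \ker R$, one can push each $M_k$ forward to a well-defined positive operator $E_k$ on $\mathcal{K} = \operatorname{ran} R$ via $E_k(R x) := $ the operator determined by $\bra{Rx}E_k\ket{Ry} = x^* M_k y$ — this is well-defined precisely because $\ker R \subseteq \ker M_k$. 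Then $\sum_k E_k = I_{\mathcal{K}}$, and extending by (say) adding $I_{\mathcal{H}} - P_{\mathcal{K}}$ to one of the outcomes gives a genuine POVM on all of $\mathcal{H}$; the support condition on $M_k$ gives $\bra{\phi_i}E_k\ket{\phi_i} = (M_k)_{ii} = 0$ hence $E_k\ket{\phi_i}=0$ for $i\notin T_k$.

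The main obstacle — and the place to be careful — is the well-definedness in $(2)\Rightarrow(1)$: passing from an abstract positive semidefinite matrix $M_k$ that "lives on indices" to a bona fide operator $E_k$ on the Hilbert space requires exactly the kernel containment $\ker R \subseteq \ker M_k$, which in turn requires the decomposition $\sum_k M_k = M$ together with positivity of each summand (so that no $M_k$ can "see" a direction that $M$ cannot). I would state this as a short lemma or inline argument: for positive semidefinite $M_k$ with $\sum M_k = M$, one has $x^*Mx = 0 \Rightarrow x^*M_k x = 0 \Rightarrow M_k x = 0$. Everything else is bookkeeping with the synthesis operator $R$ and the elementary identity $(R^*ER)_{ij} = \bra{\phi_i}E\ket{\phi_j}$.
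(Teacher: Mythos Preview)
Your proposal is correct and follows essentially the same architecture as the paper: both introduce the synthesis operator (the paper calls it $X$, you call it $R$) with $M = X^*X$, and for $(1)\Rightarrow(2)$ both set $M_k = X^*E_kX$ and read off the support condition; you additionally handle the ``nonzero'' clause explicitly, which the paper glosses over.

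The only substantive difference is in $(2)\Rightarrow(1)$. The paper works constructively: it writes each $M_k = \sum_i v_{k,i}v_{k,i}^*$, observes that $0\le M_k\le M$ forces each $v_{k,i}$ into the range of $X^*$, chooses preimages $\ket{\varphi_{k,i}}$ with $X^*\ket{\varphi_{k,i}} = \mu_{k,i}^{-1}v_{k,i}$, and then sets $E_k = \sum_i \mu_{k,i}^2\kb{\varphi_{k,i}}{\varphi_{k,i}}$; summing to the identity is verified via $X^*(\sum_k E_k)X = M = X^*X$ together with the assumption that the states span $\mathcal H$. Your route is the abstract version of the same idea: you use the kernel containment $\ker R\subseteq \ker M_k$ (equivalent to the paper's range containment) to push $M_k$ forward to a well-defined positive operator $E_k$ on $\operatorname{ran} R$ via the sesquilinear form, then extend. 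Both arguments hinge on exactly the same fact, and each is a standard way to invert the map $E\mapsto X^*EX$ on the appropriate subspace; yours is a bit cleaner, the paper's a bit more explicit.
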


%\begin{proof} 
%(1) implies (2):  Let $\{E_k\}_{k=1}^{m}$ be the POVM which implements this distinguishability set.  Now for  $1\le k\le m$, let $M_k$ be the $n$ by $n$ matrix whose $(i,j)$th entry is $\bra{\phi_i} E_k \ket{\phi_j}$.  Then $\sum_{k=1}^m M_k=M$ as $\sum_{k=1}^m E_k = I_{\mathcal H}$, and for $1\le k\le m$,  $M_k$ is supported on $T_k$ as $i \notin T_k$ implies $0 = \mathrm{tr}(E_k \kb{\phi_i}{\phi_i}) = \bra{\phi_i}E_k\ket{\phi_i}$. 

%(2) implies (1):  Without loss of generality let us assume that the states $\{ \ket{\phi_i} \}_{i=1}^n$ span $\mathcal{H}$ and that all $M_k$ are rank one.  Let $X$ be a linear operator from $\mathbb{C}^n$ to  $\mathcal{H}$ which maps the standard basis vector $e_j$ to $\ket{\phi_j}$ for all $j$.  Then $M=X^*X$.  We now can use $\sum_{k=1}^m M_k=M$ to show that $M=Y^*Y$ where $Y$ is an $m$ by $n$ matrix whose $(i,j)$ entry is zero if $j\not \in T_i$.                {\bf DK: give more detail on this direction of the proof?}   
%\end{proof}

\begin{proof} 
Without loss of generality let us assume that the states $\{ \ket{\phi_i} \}_{i=1}^n$ span $\mathcal{H}$ (so $d := \dim\H \le n$). Let $X$ be the rank-$d$ linear operator from $\mathbb{C}^n$ to  $\mathcal{H}$ which maps the standard basis vector $\ket{j}$ to $\ket{\phi_j}$ for all $j$. That is, $X = \sum_{j = 1}^n \kb{\phi_j}{j}$. It is immediate to see that $M=X^*X$. 

To prove that (1) implies (2):  Let $\{E_k\}_{k=1}^{m}$ be the POVM that implements this distinguishability set $\{A_k\}$.  For each $1\le k\le m$, we can define the positive semidefinite matrix $M_k = X^*E_kX$. That is, $M_k$ is the $n$ by $n$ matrix whose $(i,j)$th entry is $\bra{\phi_i} E_k \ket{\phi_j}$. We can check that $\sum_k M_k = \sum_k X^*E_kX = X^*X = M$, since the elements of the POVM sum to the identity. 

By Definition \ref{def:DistinguishabilitySet}, $\bra{\phi_i} E_k \ket{\phi_j} \ne 0$ implies that $\ket{\phi_i}$ and $\ket{\phi_j}$ are both in $A_k$. Hence $M_k$ is supported on $T_k$.

For the converse (2) implies (1):  We are given the decomposition $M = \sum_{k = 1}^m M_k$, with each $M_k$ supported on $T_k$, and we wish to construct a  POVM to distinguish the sets $\{A_k\}$, with each $A_k = \{ \ket{\phi_i} \}_{i\in T_k}$. We can write each $M_k = \sum_i v_{k,i}v_{k,i}^*$ for  unnormalized vectors $v_{k,i} \in \C^n$. Since $0 \le M_k \le M$, we see that each $v_{k,i}$ is in the image of $M = X^* X$, and hence in the image of $X^*$. Thus, there exists a state $\ket{\varphi_{k,i}}$ and a positive constant $\mu_{k,i}$ such that $X^* \ket{\varphi_{k,i}} = \frac{1}{\mu_{k,i}} v_{k,i}$, for each $k$ and $i$.

We can now explicitly define $E_k = \sum_i \mu_{k,i}^2 \kb{\varphi_{k,i}}{\varphi_{k,i}}$. We claim that $\{E_k\}$ is a POVM that implements $\{A_k\}$ as a distinguishability set. 

Firstly, each $E_k\ge 0$ by construction, so showing it is a POVM requires only that $\sum_k E_k = I_d$. We calculate: 
\begin{equation}
X^*\left(\sum_k E_k\right)X = \sum_k X^*E_kX = \sum_k M_k = M = X^*X . 
\end{equation}
Hence, $X^*\left(I_{d} - \sum_k E_k\right)X = 0$, which implies that $\sum_k E_k = I_{d}$ (since the rank of $X$ is the dimension $d$ of $\H$ and the nullity of $X^*$ is 0). 

Lastly, we show that this POVM distinguishes $\{A_k\}$. Given our definition of $X$, we see that for each $i$ and $k$, 
\begin{equation}
\bra{i} M_k \ket{i} =\bra{i} X^*E_k X \ket{i} = \bra{\phi_i}E_k\ket{\phi_i}. 
\end{equation}
Since $\ket{\phi_i} \notin A_k$ implies that $\bra{i} M_k \ket{i}= 0$, it also implies $\bra{\phi_i}E_k\ket{\phi_i} = 0$, and so our POVM distinguishes the $\{A_k\}$. 
\end{proof}

For one-way LOCC on product states, we can define a variant of the Gram matrix that uses only the component of the state that we are going to measure first. 

\begin{defn} 
Let $\mathcal S=\{ \ket{\psi_i^A} \otimes  \ket{\psi_i^B}\}$ be a collection of product states on the Hilbert space $\mathcal H= \mathcal H_A \otimes \mathcal H_B$. Then the {\em Alice Gram matrix} of $\mathcal S$ is the $n$ by $n$ matrix whose $(i,j)$th entry is  $\bk{\phi_i^A}{\phi_j^A}$, and the {\em Bob Gram matrix} of $\mathcal S$ is the $n$ by $n$ matrix whose $(i,j)$th entry is  $\bk{\phi_i^B}{\phi_j^B}$. 
\end{defn}

%\subsection{Main Result}

We shall follow the notation from \cite{paulsen1989schur} in the following.  If $G=(V,E)$ is a graph, then let $$\mathcal S_G=\{[a_{ij}]: a_{ij}=0 \,\, \mathrm{if} \,\, (i,j)\notin E \}$$ be the operator system defined by the graph, where the matrix sizes are given by the cardinality of the vertex set (and non-zero diagonal entries are allowed). Observe that $G_1\leq G_2$ implies $S_{G_1} \subseteq S_{G_2}$. We note that the notation $\mathcal S_G$ has more restrictive meaning in the minimum rank literature; but this will not be significant in what follows. 

\begin{cor}\label{gram} 
Let $\mathcal S$ be a collection of orthonormal product states with Alice Gram matrix $M$. Then $\mathcal S$ is distinguishable with one way LOCC with Alice going first if and only if $M$ is the sum of rank one positive semidefinite matrices in $\mathcal S_{\overline{G_B}}$.
\end{cor}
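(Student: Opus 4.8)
The plan is to derive Corollary~\ref{gram} as a direct specialization of Theorem~\ref{firstmain}, where the collection of pure states is taken to be Alice's components $\{\ket{\psi_i^A}\}$ (living on $\mathcal H_A$) rather than the full product states, and where the measurement is Alice's first-round measurement in the one-way LOCC protocol. The Alice Gram matrix $M$ plays the role of the Gram matrix in Theorem~\ref{firstmain}. The two things I need to pin down are: (i) what the correct subsets $T_k$ are, and (ii) why a one-way LOCC distinguishing protocol (Alice first) corresponds exactly to a distinguishability set for $\{\ket{\psi_i^A}\}$ whose associated subsets are compatible with $\mathcal S_{\overline{G_B}}$.

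First I would recall the structure of a one-way LOCC measurement $\mathcal M = \{A_k \otimes B_{k,j}\}$ with $\sum_k A_k = I_A$ and $\sum_j B_{k,j} = I_B$: Alice measures $\{A_k\}$, sends $k$ to Bob, and Bob measures $\{B_{k,j}\}$. Perfect distinguishability means that whenever outcome $(k,j)$ has nonzero probability on state $\ket{\psi_i}$, we must have $i = j$. I would argue this is equivalent to the following: for each Alice-outcome $k$, letting $T_k = \{i : A_k \ket{\psi_i^A} \neq 0\}$ be the set of indices that ``survive'' Alice's outcome $k$, Bob must be able to perfectly distinguish the states $\{\ket{\psi_i^B} : i \in T_k\}$ with a single POVM $\{B_{k,j}\}_{j \in T_k}$. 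By the standard fact (used already in the special case discussion after Definition~\ref{def:DistinguishabilitySet}) that a set of pure states is perfectly distinguishable by a single POVM if and only if they are mutually orthogonal, this holds if and only if $\{\ket{\psi_i^B}\}_{i \in T_k}$ is an orthogonal set, i.e. the induced subgraph of $G_B$ on $T_k$ is complete, i.e. $T_k$ is a clique in $G_B$ — equivalently, the $T_k \times T_k$ block pattern lies inside $\mathcal S_{\overline{G_B}}$. So a one-way LOCC protocol with Alice first exists if and only if $\{A_k := \{\ket{\psi_i^A}\}_{i\in T_k}\}$ is a distinguishability set for $\{\ket{\psi_i^A}\}$ with each $T_k$ a clique of $G_B$.

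Next I would invoke Theorem~\ref{firstmain} applied to $\{\ket{\psi_i^A}\}_{i=1}^n$ with Gram matrix $M$: this distinguishability set exists (with the given $T_k$) if and only if $M = \sum_k M_k$ where each $M_k \geq 0$ is nonzero and supported on $T_k$. Since $M_k$ is positive semidefinite and supported on a set $T_k$, its off-diagonal nonzero entries all lie in $T_k \times T_k$; the requirement that such a decomposition exists with each $T_k$ a clique of $G_B$ is precisely the statement that $M$ is a sum of positive semidefinite matrices each lying in $\mathcal S_{\overline{G_B}}$ (a PSD matrix in $\mathcal S_{\overline{G_B}}$ is, up to splitting, one supported on a clique, and conversely any clique-supported PSD matrix lies in $\mathcal S_{\overline{G_B}}$). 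Finally, by spectral decomposition every PSD matrix in $\mathcal S_{\overline{G_B}}$ supported on a clique is a sum of rank-one PSD matrices in $\mathcal S_{\overline{G_B}}$ (each rank-one piece $vv^*$ has the same support), so ``sum of PSD matrices in $\mathcal S_{\overline{G_B}}$'' and ``sum of rank-one PSD matrices in $\mathcal S_{\overline{G_B}}$'' coincide. Assembling these equivalences yields the corollary.

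\textbf{Main obstacle.} The routine parts are the spectral-decomposition refinement and the bookkeeping translating ``supported on a clique'' into ``membership in $\mathcal S_{\overline{G_B}}$.'' The genuinely substantive step is the first one: showing that optimal one-way LOCC with Alice first reduces, without loss of generality, to Alice performing a single POVM followed by Bob distinguishing an orthogonal set — in particular, that Bob can succeed for outcome $k$ exactly when $\{\ket{\psi_i^B}\}_{i \in T_k}$ is mutually orthogonal, and that we may take $T_k$ to be exactly the support of $A_k$ on Alice's states. I would need to be careful that partial (non-orthogonal) information Alice leaks does not help, and that allowing Alice's operators $A_k$ to be arbitrary positive operators (not just projections) does not enlarge the class of achievable $T_k$ patterns beyond what $\mathcal S_{\overline{G_B}}$ captures; this is handled by the observation that $\operatorname{supp}(A_k)$ on $\{\ket{\psi_i^A}\}$ is all that matters for which states reach Bob, together with Theorem~\ref{firstmain} absorbing the freedom in choosing the $A_k$.
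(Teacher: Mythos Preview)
Your approach is essentially the same as the paper's: reduce to Theorem~\ref{firstmain} applied to Alice's components, identify Alice's outcome-sets $T_k$ with cliques in $\overline{G_B}$, and pass to rank-one summands via spectral decomposition. The core chain of implications is correct. Two points need fixing.

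First, a slip: orthogonality of $\{\ket{\psi_i^B}\}_{i\in T_k}$ means $T_k$ is an \emph{independent} set in $G_B$, i.e.\ a clique in $\overline{G_B}$, not a clique in $G_B$ as you write (twice). Your conclusion that the support pattern lies in $\mathcal S_{\overline{G_B}}$ is nonetheless correct.

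Second, and more substantively, you route the argument through the intermediate statement ``$M$ is a sum of positive semidefinite matrices in $\mathcal S_{\overline{G_B}}$'' and assert both that this is equivalent to ``$M=\sum_k M_k$ with each $M_k$ supported on a clique'' and that it is equivalent to the rank-one version. Neither equivalence holds in general: a PSD matrix in $\mathcal S_{\overline{G_B}}$ need not be supported on a clique nor decomposable into rank-one PSD matrices in $\mathcal S_{\overline{G_B}}$ --- that is precisely the content of Lemma~\ref{jfaresult}, and it fails whenever $\overline{G_B}$ is not chordal. What \emph{is} true, and what both you and the paper actually need, is the direct equivalence between ``$M_k$ PSD supported on a clique of $\overline{G_B}$'' and ``$M_k$ a sum of rank-one PSD matrices in $\mathcal S_{\overline{G_B}}$'': the forward direction is spectral decomposition (rank-one pieces inherit the clique support), and the backward direction holds because a rank-one PSD matrix $vv^*\in\mathcal S_{\overline{G_B}}$ has all off-diagonal entries on its support nonzero, forcing that support to be a clique. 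Drop the intermediate step and argue this equivalence directly, as the paper does.
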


\begin{proof} 
Let $M$ be the Alice Gram matrix of $\mathcal S$ and suppose we have  rank one positive semidefinite elements $M_k$ of $\mathcal S_{\overline{G_B}}$ for all $k$, such that $M=\sum_{k=1}^m M_k$. Then there exists a set of (possibly overlapping) cliques $\{T_k\}_{k=1}^m$ that define a graph $G\subseteq \bigcup_{k=1}^m T_k$ with $G\leq \overline{G_B}$, such that $M_k$ is supported on $T_k$. (The graph $T_k$ is determined by the support of $M_k$, and the fact it is a clique follows since $M_k$ is rank one.)  By Theorem~\ref{firstmain}, there is a POVM on $\mathcal H_A$ corresponding to this clique cover of $G$. If Alice applies this POVM (which acts on her component of the states) and gets say outcome $k$, then she can restrict the state to be determined to those corresponding to elements of $T_k$, which is a clique in $G$ and hence in $\overline{G_B}$. As the corresponding vertices are mutually disconnected in $G_B$, Bob is left with an orthonormal set of states that he can then distinguish. 

Conversely,  if $\mathcal S$ is distinguishable by one-way LOCC with Alice going first, then there must be a clique cover $\{T_k\}_{k=1}^m$ of $\overline{G_B}=G$ which corresponds to a distinguishability set for the Alice components of the set of states.  By Theorem~\ref{firstmain}, this means  that there exists positive semidefinite matrices $\{M_k\}_{k=1}^m$ such that $\sum_{k=1}^m M_k=M$ and $M_k$ is supported on $T_k$ for $1\le k\le m$.  Each $M_k$ can be written as a sum of rank-one positive semidefinite matrices (the total number of which equals the rank of $M_k$). The diagonal entries of any of these rank one matrices must be less than or equal to the corresponding diagonal entries of $M_k$. Hence these rank one matrices must necessarily be supported on $T_k$, which means they are all members of $ \mathcal S_{\overline{G_B}}$.
%But $T_k$ is a clique and hence chordal, so every $M_k$ can be written a sum of rank one positive semidefinite matrices in $\mathcal S_{T_k}\subseteq \mathcal S_{\overline{G_B}}$.
\end{proof}

We can use this result to derive an LOCC-based characterization of graph chordality. In doing so, we link this LOCC condition with other matrix theoretic conditions via the main result from \cite{paulsen1989schur}. We state the relevant part of \cite[Theorem 2.4]{paulsen1989schur} for our purposes as the following lemma. 

\begin{lem}\label{jfaresult}
Let $G$ be a graph.  Then every positive semidefinite matrix in $\mathcal S_G$ is the sum of rank one positive semidefinite matrices in $\mathcal S_G$ if and only if $G$ is a chordal graph. 
\end{lem}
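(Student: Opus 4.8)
The plan is to first translate the decomposition condition into combinatorial language, then handle the two implications separately, the harder being the construction of an obstruction for non‑chordal graphs. Observe that a rank‑one positive semidefinite matrix $vv^{*}$ lies in $\mathcal S_{G}$ exactly when the support of $v$ is a clique of $G$: if $v_{i},v_{j}\neq 0$ and $(i,j)\notin E$ then $(vv^{*})_{ij}=v_{i}\overline{v_{j}}\neq 0$, contradicting membership in $\mathcal S_{G}$. Grouping rank‑one summands according to a clique containing their support, and conversely diagonalizing a positive semidefinite matrix supported on a clique, one sees that ``$M$ is a sum of rank‑one positive semidefinite matrices in $\mathcal S_{G}$'' is equivalent to ``$M=\sum_{C}M_{C}$, the sum over cliques $C$ of $G$, with each $M_{C}\succeq 0$ supported on $C$.'' I will work with this second formulation throughout.

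For the implication ``$G$ chordal $\Rightarrow$ such a decomposition exists'' I would induct on $|V|$, using the standard fact that a chordal graph has a simplicial vertex $v$ and that $G-v$ (the induced subgraph on $V\setminus\{v\}$) is again chordal. Placing the index $v$ last, write $M=\begin{pmatrix}A & b\\ b^{*} & d\end{pmatrix}$; membership in $\mathcal S_{G}$ forces $b$ to be supported on $N(v)$. If $d=0$ then positivity forces $b=0$ and one applies induction to $A\in\mathcal S_{G-v}$. If $d>0$, peel off the rank‑one matrix $M_{0}=d^{-1}\bigl(\begin{smallmatrix}b\\ d\end{smallmatrix}\bigr)\bigl(b^{*}\ \ d\bigr)$, whose support is the closed neighbourhood $\{v\}\cup N(v)$ — a clique because $v$ is simplicial — so $M_{0}\in\mathcal S_{G}$; the remainder $M-M_{0}$ has vanishing $v$‑row and $v$‑column, its $(n-1)\times(n-1)$ block is the Schur complement $A-d^{-1}bb^{*}\succeq 0$, and using again that $N(v)$ is a clique one checks this block lies in $\mathcal S_{G-v}$, so induction finishes it.

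For the converse I would prove the contrapositive: if $G$ is not chordal it contains an induced cycle $C_{k}$ with $k\geq 4$, and I would exhibit a positive semidefinite matrix in $\mathcal S_{C_{k}}$ with no clique decomposition and transplant it to $G$. The transplant is clean precisely because $C_{k}$ is \emph{induced}: put the bad matrix $M_{0}$ in the $V(C_{k})$ block and zeros elsewhere to get $M\in\mathcal S_{G}$ with $M\succeq 0$; if $M=\sum_{C}M_{C}$, restricting each $M_{C}$ to $V(C_{k})$ gives positive semidefinite matrices supported on $C\cap V(C_{k})$, and since adjacency in $G$ between two vertices of $C_{k}$ coincides with adjacency in $C_{k}$, each $C\cap V(C_{k})$ is a clique of $C_{k}$, i.e.\ a single vertex or a single edge ($C_{k}$ has no triangle for $k\geq 4$). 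So a decomposition of $M$ would yield a decomposition of $M_{0}$ into vertex‑ and edge‑supported positive semidefinite summands, which I must rule out.

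Building that obstruction on $C_{k}$ is the step requiring the most care. Absorbing vertex‑supported pieces into an incident edge, any such decomposition has the form $M_{0}=\sum_{i=1}^{k}N_{i}$ with $N_{i}\succeq 0$ supported on the edge $\{i,i+1\}$; since only $N_{i}$ contributes to the $(i,i+1)$ entry, the $2\times 2$ block of $N_{i}$ on $\{i,i+1\}$ is $\bigl(\begin{smallmatrix}\alpha_{i}^{+} & m_{i}\\ \overline{m_{i}} & \alpha_{i+1}^{-}\end{smallmatrix}\bigr)$ with $m_{i}=(M_{0})_{i,i+1}$, whence $\alpha_{i}^{+}\alpha_{i+1}^{-}\geq|m_{i}|^{2}$, while the diagonal splits as $\alpha_{i}^{-}+\alpha_{i}^{+}=(M_{0})_{ii}$. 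Multiplying over the $k$ edges and applying AM--GM ($\alpha_{i}^{+}\alpha_{i}^{-}\leq\tfrac14(M_{0})_{ii}^{2}$) gives the necessary condition $\prod_{i}(M_{0})_{ii}\geq 2^{k}\prod_{i}|m_{i}|$. I would then violate it by taking $M_{0}=I+tC$, where $C$ is the weighted adjacency matrix of $C_{k}$ with all edge magnitudes equal to $1$ and phases chosen so that $\lambda_{\min}(C)=-2\cos(\pi/k)$ (concretely, the ordinary cycle for odd $k$, and one edge set to $-1$ for even $k$), and $t=1/(2\cos(\pi/k))$: then $M_{0}\succeq 0$, it has unit diagonal and cycle‑edge entries of modulus $t$, yet $2^{k}t^{k}=\cos^{-k}(\pi/k)>1=\prod_{i}(M_{0})_{ii}$ for $k\geq 4$, so no decomposition exists (taking $t$ slightly smaller makes $M_{0}$ positive definite if a strict example is wanted, and for $k=4$ one may simply use the Gram matrix of $(1,0),\ \tfrac1{\sqrt2}(1,1),\ (0,1),\ \tfrac1{\sqrt2}(-1,1)$ in $\R^{2}$). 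As an alternative to this explicit construction one could invoke the Grone--Johnson--Sá--Wolkowicz characterization of chordal patterns via positive semidefinite completions together with convex cone duality, but the direct argument above keeps everything self‑contained.
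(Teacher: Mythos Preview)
Your proof is correct, but note that the paper does not actually prove this lemma: it simply quotes it as the relevant part of Theorem~2.4 of Paulsen--Power--Smith (\emph{J.\ Funct.\ Anal.}, 1989) and uses it as a black box. So there is nothing to compare against in the paper itself.

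That said, your self-contained argument is sound and worth keeping. The forward direction via simplicial-vertex induction and Schur complement is the standard elementary route and is carried out cleanly; the check that the Schur complement stays in $\mathcal S_{G-v}$ because $bb^{*}$ is supported on the clique $N(v)$ is exactly the point. For the converse, your reduction to an induced cycle is correct (inducedness is precisely what makes $C\cap V(C_k)$ a clique of $C_k$), and the AM--GM obstruction $\prod_i (M_0)_{ii}\ge 2^{k}\prod_i |m_i|$ together with the explicit $M_0=I+tC$ at $t=1/(2\cos(\pi/k))$ is a nice concrete witness; the parity-dependent phase twist is needed, since for even $k$ the untwisted cycle has $\lambda_{\min}=-2$ and the inequality is met with equality rather than violated. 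The original Paulsen--Power--Smith argument proceeds instead through duality with the Grone--Johnson--S\'a--Wolkowicz positive-semidefinite completion theorem (which you mention as an alternative); your direct construction trades that machinery for an explicit matrix, which is arguably more transparent for the present paper's audience.
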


We can now prove our main result. 

\begin{thm} \label{mainthm}
Let $G$ be a graph with $n$ vertices.  Then $G$ is chordal if and only if every collection of $n$ product states having Alice graph $G_A$ and Bob graph $G_B$ with $G_A \leq G \leq \overline{G_B}$ is distinguishable with one-way LOCC with Alice going first.    
\end{thm}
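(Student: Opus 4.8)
The plan is to prove both directions by converting the one-way LOCC condition into a statement about decompositions of the Alice Gram matrix via Corollary~\ref{gram}, and then invoking Lemma~\ref{jfaresult}. For the implication ``$G$ chordal $\Rightarrow$ distinguishable'', let $\mathcal S=\{\ket{\psi_i^A}\otimes\ket{\psi_i^B}\}_{i=1}^n$ be any collection of product states with $G_A\le G\le\overline{G_B}$. Since $G_A\le\overline{G_B}$ the product states are mutually orthogonal, and taking each tensor factor to be a unit vector we may assume $\mathcal S$ is orthonormal. Let $M$ be its Alice Gram matrix; by the definition of the Alice graph, the off-diagonal entry of $M$ at any non-edge of $G_A$ vanishes, so $M\in\mathcal S_{G_A}\subseteq\mathcal S_G$ and $M\succeq 0$. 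Chordality together with Lemma~\ref{jfaresult} expresses $M$ as a sum of rank-one positive semidefinite matrices lying in $\mathcal S_G$; since $G\le\overline{G_B}$ gives $\mathcal S_G\subseteq\mathcal S_{\overline{G_B}}$, these rank-one summands all lie in $\mathcal S_{\overline{G_B}}$, and Corollary~\ref{gram} concludes that $\mathcal S$ is distinguishable by one-way LOCC with Alice going first. This direction is essentially a bookkeeping of operator-system inclusions.

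For the converse I would argue the contrapositive: assuming $G$ is not chordal, I would exhibit one collection with $G_A\le G\le\overline{G_B}$ that is not distinguishable. A non-chordal $G$ contains a chordless induced cycle $C_\ell$ with $\ell\ge 4$ on a vertex set $W$. Applying Lemma~\ref{jfaresult} to $C_\ell$ produces a positive semidefinite $M_W\in\mathcal S_{C_\ell}$ that is \emph{not} a sum of rank-one positive semidefinite matrices in $\mathcal S_{C_\ell}$; moreover $M_W$ must have strictly positive diagonal, since a zero diagonal entry would force the corresponding row and column to vanish and leave a counterexample supported on the induced subgraph of $C_\ell$ on $W$ minus a vertex --- a path, hence chordal --- contradicting Lemma~\ref{jfaresult}. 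Set $M=M_W\oplus I_{n-\ell}$, a positive semidefinite $n\times n$ matrix with positive diagonal whose off-diagonal support is contained in $E(C_\ell)\subseteq E(G)$. Factoring $M=Y^*Y$ yields nonzero vectors $\ket{\psi_i^A}$ whose Alice graph satisfies $G_A\le G$; I would then pick an orthogonal representation $\{\ket{\psi_i^B}\}$ of $\overline G$ (available in sufficiently high dimension), rescaled so that each $\ket{\psi_i^A}\otimes\ket{\psi_i^B}$ is a unit vector, so that $G_B=\overline G$, hence $G_A\le G=\overline{G_B}$, and $\mathcal S=\{\ket{\psi_i^A}\otimes\ket{\psi_i^B}\}$ is orthonormal with Alice Gram matrix $M$. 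Finally $M$ is not a sum of rank-one positive semidefinite matrices in $\mathcal S_{\overline{G_B}}=\mathcal S_G$: restricting any such decomposition to the coordinates in $W$, and using that the induced subgraph on $W$ is $C_\ell$ so that every clique of $G$ meets $W$ in a clique of $C_\ell$, would display $M_W$ as a sum of rank-one positive semidefinite matrices in $\mathcal S_{C_\ell}$, a contradiction. By Corollary~\ref{gram}, $\mathcal S$ is not one-way LOCC distinguishable with Alice going first.

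The delicate part is the converse: Lemma~\ref{jfaresult} hands us only an abstract matrix, so the real work is realizing it as the Alice Gram matrix of genuine product states with the prescribed graph relations. This requires controlling the diagonal so the Alice vectors are nonzero (handled by localizing the counterexample to a chordless cycle, whose vertex-deleted subgraphs are chordal paths), padding up to size $n$ without enlarging the support, and --- the crucial point --- verifying that non-decomposability survives the passage from $\mathcal S_{C_\ell}$ to the possibly much larger operator system $\mathcal S_G$, which is exactly where the inducedness of the chordless cycle is used.
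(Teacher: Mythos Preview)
Your forward direction matches the paper's argument exactly: both run the Alice Gram matrix through Lemma~\ref{jfaresult} via the inclusions $\mathcal S_{G_A}\subseteq\mathcal S_G\subseteq\mathcal S_{\overline{G_B}}$ and then invoke Corollary~\ref{gram}.

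For the converse you take a genuinely different, and in one respect more careful, route. The paper applies Lemma~\ref{jfaresult} directly to the full graph $G$: it pulls out a positive semidefinite $M\in\mathcal S_G$ that is not a rank-one sum in $\mathcal S_G$, rescales the nonzero diagonal entries to $1$ by a positive diagonal similarity, and then realizes $M$ as an Alice Gram matrix with Bob chosen so that $\overline{G_B}=G$. The paper does \emph{not} explicitly argue that the diagonal of $M$ can be taken strictly positive, which is needed so that every Alice vector is nonzero. Your localization to an induced chordless cycle $C_\ell$ addresses precisely this point: a zero diagonal entry would push the obstruction down to the path $P_{\ell-1}$, which is chordal, contradicting Lemma~\ref{jfaresult}. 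You then pad with $I_{n-\ell}$ to restore size $n$ with full support. The price you pay is the extra restriction argument showing that non-decomposability survives the passage from $\mathcal S_{C_\ell}$ to the larger $\mathcal S_G$; this is where you use that the cycle is an \emph{induced} subgraph, so that any $G$-clique meets $W$ in a $C_\ell$-clique, whence restricting a putative rank-one decomposition in $\mathcal S_G$ to the $W$-coordinates yields one in $\mathcal S_{C_\ell}$. That step is correct and is exactly the right use of inducedness.

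One small point worth tightening: after your construction the diagonal of $M_W$ need not be constant, so the Alice vectors are nonzero but not unit. Corollary~\ref{gram} (via Theorem~\ref{firstmain}) does not actually require the Alice components to be unit vectors, only the product states to be orthonormal, which you arrange by rescaling Bob; but if you prefer to match the ``pure state'' language literally, a single positive diagonal similarity normalizes $M_W$ to unit diagonal without affecting membership in $\mathcal S_{C_\ell}$ or non-decomposability --- exactly the move the paper makes.
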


\begin{proof} 
Suppose $G$ is chordal and we have a collection of $n$ product states with $G_A \leq G \leq \overline{G_B}$.  Let $M$ be the Gram Matrix of the Alice components of the product states.  Then $M\in \mathcal S_{G}$.  Since $G$ is chordal, Lemma~\ref{jfaresult} implies that $M$ can be written as a sum of rank one positive semidefinite matrices in $\mathcal S_G\subseteq \mathcal S_{\overline{G_B}}$.  Hence $\mathcal S$ is distinguishable by one-way LOCC with Alice going first by Corollary~\ref{gram}.

Now suppose $G$ is not chordal.  Let $M\in \mathcal S_{G}$ be a positive semidefinite matrix in $S_G$ which is not the sum of rank one positive semidefinite matrices in $\mathcal S_G$, as is given by Lemma~\ref{jfaresult}.  We can assume without loss of generality that $M$ 
%is a correlation matrix (i.e., has all diagonal entries equal to one).  
has all its non-zero diagonal entries equal to 1; indeed, any general $M$ is similar to such a matrix via a positive diagonal similarity matrix, which will not disrupt the matrix belonging to $S_G$ or it not being a sum of rank one matrices in $S_G$. 
We now construct a set of product states as follows.  The Alice components of the states are chosen so as to have Gram matrix $M$ (a matrix $X$ can be constructed using the entries of $M$ with $X^*X = M$, similar to the proof of Theorem~\ref{firstmain}).  The Bob components of the states are chosen so that $\overline{G_B}=G$. By Corollary~\ref{gram}, this set of states is not distinguishable by one-way LOCC with Alice going first.
\end{proof}

%Suppose we are given a set of Alice's states in $\C^d$ with corresponding chordal graph $G= G_A$. Since the $n$ by $n$ identity matrix is a positive semidefinite matrix in $\mathcal S_G$, it can be written as a sum of rank one projections in $\mathcal S_G$. Each of these projections has support on a clique of $G$, so it can be the first step in a one-way LOCC protocol. Thus, we can always distinguish product states when Alice's graph is chordal or when there exists a chordal graph $G$ between $G_A$ and $\overline{G_B}$.

\begin{rmk}
{\rm 
This result has a number of consequences for distinguishability, some of which we will explore below. We note this result subsumes Proposition~3 of \cite{kribs2020vector}, which proved the forward direction of the result for the subclass of chordal graphs called {\it $k$-trees}. We also note that a weaker converse of the result is also relatively simple to show: Let $G_A= \overline{G_B}$ be any graph for which the minimum vector rank is strictly less than the clique cover number $\mathrm{cc}(G_A)$. Then there exist representations of $G_A$ that cannot be distinguished with one-way LOCC.  This follows immediately from looking at representations in the minimum dimension.
}
\end{rmk}

\subsection{Applications and Examples}

The results above show how one-way distinguishability is linked with chordal graph structure. Here we present some examples that illustrate this connection and its limitations. As applications we also derive a new distinguishability result, an improvement of a previous result, and a new proof for another. 

\begin{exa}
{\rm 
Consider the (unnormalized) set of qubit-qutrit product states $\ket{\psi_{i}} \in \mathbb{C}^{2} \otimes \mathbb{C}^{3} $, for $1\leq i \leq 4$, defined as follows:
\begin{align*}
\ket{\psi_{1}}& = \ket{0} \otimes \ket{0}
%\\
& \ket{\psi_{2}} &=  \left(   \ket{0} + \ket{1}  \right)  \otimes \ket{1} \\
\ket{\psi_{3}} &=  \left(   \ket{0} - \ket{1}  \right)  \otimes \ket{1} 
%\\
& \ket{\psi_{4}} &= \left(   \ket{0} +  \ket{1}  \right)  \otimes \ket{2} 
\end{align*}
Here we have Alice graph as the four cycle $C_4$ with the edge $(3,4)$ removed and the edge $(1,4)$ added. The Bob graph is the graph with four vertices and a single edge $(2,3)$. Thus, $G_A \lneq \overline{G_B}$ and both of these graphs are chordal. 

Observe these states are one-way distinguishable with Alice going first. Indeed, she can measure in her $\{\ket{+},\ket{-}\}$ basis, then communicate the result to Bob, which will tell him the state is either from the set $\{1,2,4\}$ (in the case of a $+$ result), or the set $\{1,3\}$ (in the case of a $-$ result). Bob then can measure in his $\{\ket{0},\ket{1},\ket{2}\}$ basis to determine which state they have. 
}
\end{exa}

\begin{exa}\label{Ex:4Cycle in 2D}
{\rm 
As a variant on the previous example, consider the two-qubit states $\ket{\psi_{i}} \in \mathbb{C}^{2} \otimes \mathbb{C}^{2} $, for $1\leq i \leq 4$, defined as follows:
\begin{align*}
\ket{\psi_{1}}& = \ket{0} \otimes \ket{0}
%\\
& \ket{\psi_{2}} &=  \left(   \ket{0} + \ket{1}  \right)  \otimes \ket{1} \\
\ket{\psi_{3}} &=  \left(   \ket{0} - \ket{1}  \right)  \otimes \ket{1} 
%\\
& \ket{\psi_{4}} &= \ket{1}   \otimes \ket{0} 
\end{align*}
These states cannot be distinguished via one-way LOCC with Alice going first, as Bob can only distinguish amongst the first pair or the second pair of states, and Alice cannot reduce the problem to those pairs no matter what her measurement is. 

Note in this case that the Alice graph is just the four cycle $C_4$, which is not chordal. The Bob graph is the graph with four vertices and two edges $\{(2,3), (1,4)\}$. So we have  $G_A = \overline{G_B}$, and this graph is not chordal. 
}
\end{exa}

These examples exhibit the role of chordal graphs in one-way distinguishability as given in the theorem above, and they may be viewed as special cases of the following consequence of that result. 

\begin{cor}\label{chordalcor}
Let $\mathcal S$ be a collection of orthonormal product states. If $G_A$ or $\overline{G_B}$ is a chordal graph, then the states in $\mathcal S$ are one-way distinguishable with Alice going first. 
\end{cor}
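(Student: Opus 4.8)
The plan is to reduce both cases to Theorem~\ref{mainthm} by exhibiting a chordal graph $G$ squeezed between $G_A$ and $\overline{G_B}$. The one input I need is the observation recorded at the end of the Preliminaries: since the states in $\mathcal S$ are mutually orthonormal, their Alice and Bob graphs satisfy $G_A \leq \overline{G_B}$. If $\overline{G_B}$ is chordal, I would take $G = \overline{G_B}$; then $G_A \leq G \leq \overline{G_B}$ holds with the right-hand relation an equality, and Theorem~\ref{mainthm} applied to this $G$ (and to the given collection $\mathcal S$, whose Alice and Bob graphs are $G_A$ and $G_B$) yields one-way distinguishability with Alice going first. If instead $G_A$ is chordal, I would take $G = G_A$; then $G_A \leq G \leq \overline{G_B}$ again holds, now with the left-hand relation an equality and the right-hand one supplied by orthonormality, and Theorem~\ref{mainthm} applies verbatim.

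Alternatively, and perhaps more transparently, the statement can be obtained directly from Corollary~\ref{gram} together with Lemma~\ref{jfaresult}, without invoking the full strength of Theorem~\ref{mainthm}. Let $M$ be the Alice Gram matrix of $\mathcal S$; it is a positive semidefinite element of $\mathcal S_{G_A}$, and since $G_A \leq \overline{G_B}$ we have $\mathcal S_{G_A} \subseteq \mathcal S_{\overline{G_B}}$, so $M \in \mathcal S_{\overline{G_B}}$ as well. When $G_A$ is chordal, Lemma~\ref{jfaresult} decomposes $M$ into a sum of rank one positive semidefinite matrices lying in $\mathcal S_{G_A} \subseteq \mathcal S_{\overline{G_B}}$; when $\overline{G_B}$ is chordal, the same lemma applied inside $\mathcal S_{\overline{G_B}}$ does the same. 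Either way, Corollary~\ref{gram} then gives that $\mathcal S$ is one-way distinguishable with Alice going first.

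I do not anticipate a genuine obstacle: the corollary is essentially a specialization of the main theorem, and the proof is a matter of choosing the right $G$. The only step needing a moment's attention is confirming that the chosen $G$ really is sandwiched between $G_A$ and $\overline{G_B}$, which is precisely where the orthonormality hypothesis (yielding $G_A \leq \overline{G_B}$) is used in both cases; without it, the two graphs need not be comparable and neither choice of $G$ would be admissible.
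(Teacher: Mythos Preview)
Your proposal is correct and matches the paper's own proof, which applies the forward direction of Theorem~\ref{mainthm} with $G$ taken to be whichever of $G_A$ or $\overline{G_B}$ is chordal, using $G_A \leq \overline{G_B}$ from orthonormality. Your alternative route via Corollary~\ref{gram} and Lemma~\ref{jfaresult} is also valid and simply unpacks that forward direction.
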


\begin{proof}
This is a straightforward consequence of the forward direction of Theorem~\ref{mainthm}, applied with $G$ equal to a chordal choice amongst the graph set $\{ G_A , \overline{G_B}\}$. 
\end{proof}

This result improves on Proposition~2 of \cite{kribs2020vector} (which just proved the result for the Bob complement graph). The converse of this result does not hold, as shown by the following example; and indeed this owes to the fact that graph chordality as characterized in the theorem depends on all possible sets of product states (and not just one) with Alice-Bob graphs sandwiched around the graph to be distinguishable. 

%\begin{exa}
%{\rm 
%Consider the four states in Example~1, with the fourth state replaced by $\ket{\psi_4} = \ket{1} \otimes \ket{2}$. These states are one-way distinguishable with Alice going first, as she can measure in her $\{\ket{+},\ket{-}\}$ basis, communicate the result to Bob, which will tell him the state is either from the set $\{1,2,4\}$ (in the case of a $+$ result), or the set $\{1,3,4\}$ (in the case of a $-$ result). Bob then can measure in his $\{\ket{0},\ket{1},\ket{2}\}$ basis to determine which state they have. The Alice graph in this case is $C_4$ and Bob graph is the graph with four vertices and a single edge $(2,3)$.

%Note in this case that the Alice graph is just the four cycle $C_4$, which is not chordal. The Bob graph is the graph with four vertices and two edges $\{(2,3), (1,4)\}$, and so we still have  $G_A = \overline{G_B}$. 
%}
%\end{exa}

\begin{exa}
{\rm 
Consider the (unnormalized) set of states $\ket{\psi_{i}} \in \mathbb{C}^{4} \otimes \mathbb{C}^{2} $, for $1\leq i \leq 4$, defined as follows:
\begin{align*}
\ket{\psi_{1}}& = \left(   \ket{0} + \ket{1}  \right) \otimes \ket{0}
%\\
& \ket{\psi_{2}} &=  \left(   \ket{1} + \ket{2}  \right)  \otimes \ket{1} \\
\ket{\psi_{3}} &=  \left(   \ket{0} + \ket{3}  \right)  \otimes \ket{1} 
%\\
& \ket{\psi_{4}} &= \left(   \ket{2} +  \ket{3}  \right)  \otimes \ket{0} 
\end{align*}
As in Example \ref{Ex:4Cycle in 2D}, the Alice graph is the four cycle $C_4$, and the Bob graph is its complement with four vertices and edges $\{(1,4),(2,3)\}$. So $G_A = \overline{G_B}$ and this graph is not chordal, but nevertheless the states are distinguishable. Indeed, Alice can measure in her $\{\ket{0},\ket{1},\ket{2},\ket{3}\}$ basis, the result of which will (respectively) tell Bob the state is amongst the pairs $\{1,3\}, \{1,2\}, \{2,4\}, \{3,4\}$, and he can then determine the state from the given pairing by measuring in his $\{\ket{0},\ket{1}\}$ basis.  

So the converse of the corollary does not hold. Also note that there is no contradiction with the theorem statement here, as, for instance, Example~2 gives a set of states with the same (non-chordal) graph that are not distinguishable. 
}
\end{exa}

The realization of the LOCC connection with chordal graphs also allows us to prove the following result, which covers a broad class of product state sets.

\begin{cor}\label{cc2cor}
Let $n,d>1$.  Suppose we have a set of orthonormal product states in $\mathbb{C}^{n} \otimes \mathbb{C}^{d} $ such that Alice's graph or Bob's complement  graph have clique cover number less than or equal to two; that is, 
\[
\min \{ \mathrm{cc}(G_A), \mathrm{cc}(\overline{G_B}) \} \leq 2 . 
\]

Then the states are distinguishable via one-way LOCC with Alice going first.
\end{cor}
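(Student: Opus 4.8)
The plan is to reduce Corollary~\ref{cc2cor} to Corollary~\ref{chordalcor} by showing that any graph with clique cover number at most two is necessarily chordal. First I would observe that if $\mathrm{cc}(G) \leq 2$, then $G$ is the union of at most two cliques on subsets of its vertex set; equivalently, the complement $\overline{G}$ contains no triangle on the vertices, but more usefully, $V$ can be partitioned (or covered) by two sets $V_1, V_2$ each of which induces a complete subgraph. The key structural fact is then: a graph that is the edge-union of two cliques cannot contain an induced cycle of length $\geq 4$. I would prove this by a short direct argument — given an induced cycle $C$ of length $\ell \geq 4$ with vertices $c_1, \ldots, c_\ell$ in cyclic order, each $c_i$ lies in $V_1$ or $V_2$; by pigeonhole among any three consecutive vertices two of them lie in the same $V_j$, and walking around the cycle one finds two non-consecutive cycle vertices both in some $V_j$, forcing an edge between them (since $V_j$ is a clique), contradicting that the cycle is induced. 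A cleaner version: colour each cycle vertex by which $V_j$ it belongs to (choosing one if it lies in both); an induced cycle of length $\geq 4$ would need its vertices to alternate colours around the cycle to avoid chords within a colour class, but a proper $2$-colouring of $C_\ell$ exists only for even $\ell$, and even then any two same-coloured vertices at distance $2$ would be adjacent — contradiction for $\ell \geq 4$. So no chordless cycle of length $\geq 4$ exists, i.e. $G$ is chordal.

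With that lemma in hand, the corollary follows immediately: if $\mathrm{cc}(G_A) \leq 2$ then $G_A$ is chordal, and if $\mathrm{cc}(\overline{G_B}) \leq 2$ then $\overline{G_B}$ is chordal; in either case Corollary~\ref{chordalcor} gives that $\mathcal S$ is one-way distinguishable with Alice going first. I would state the reduction in one or two sentences and let the structural lemma carry the weight. One should double-check the edge case where a clique in the cover is a single vertex or empty, and the case where the two cliques overlap, but these do not affect the colouring argument since we only need each vertex assigned to \emph{some} clique and each clique to be complete.

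The main obstacle — really the only non-routine point — is the combinatorial claim that the edge-union of two cliques is chordal. It is elementary but deserves a careful treatment of the "walking around the cycle" pigeonhole step, since one must rule out the even-length case as well, not just the odd-length case; the cleanest route is to note that in an induced $C_\ell$ with $\ell \geq 4$, vertices $c_1$ and $c_3$ are non-adjacent, so they cannot both lie in a common clique of the cover, which already pins down enough of the colour pattern to derive a contradiction. Everything else is a direct appeal to results already proved above.
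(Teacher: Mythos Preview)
Your proposal is correct and follows essentially the same approach as the paper: both reduce the statement to the chordal case by proving that $\mathrm{cc}(G)\le 2$ forces $G$ to be chordal, then invoke the chordal result (the paper cites Theorem~\ref{mainthm} directly, you cite Corollary~\ref{chordalcor}, which is equivalent). The only cosmetic difference is in the combinatorial lemma: the paper argues via \emph{edges}—three consecutive edges of an induced $C_\ell$ ($\ell\ge 4$) must lie in three distinct cliques—while you argue via a vertex $2$-colouring; your colouring phrasing is slightly loose because vertices may lie in both cliques, but your final ``cleanest route'' observation (that $c_1$ and $c_3$ cannot share a clique, and tracking edge-covering from there) is exactly what is needed and matches the paper's reasoning.
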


\begin{proof}
Note that a graph $G$ with clique cover number less than or equal to 2 is necessarily chordal. Indeed, if it was not chordal, then we could find a cycle within the graph with four (or more) vertices such that the graph includes no edges between the cycle vertices other than on the cycle. Hence, given a clique cover of $G$, any three consecutive edges on the cycle must belong to different cliques, which means the clique cover number for the graph must be at least three.  Thus, we have $G_A$ or $\overline{G_B}$ is a chordal graph, and so the result follows from Theorem~\ref{mainthm} applied to a chordal graph from the set $\{ G_A , \overline{G_B}\}$. 
\end{proof}

For illustrative purposes, let us explicitly describe how a one-way LOCC measurement can be constructed in the special case of the previous result, where the clique cover number of Bob's complement graph is either one or two. 

\begin{exa}
{\rm 
Let $\{ \ket{\psi_k^A}\otimes \ket{\psi_k^B} \}_{k\in I}$ be a set of orthonormal product states in $\mathbb{C}^{n} \otimes \mathbb{C}^{d} $ and let $\overline{G}_{B}$ be a the complement of the corresponding Bob graph.   If $\mathrm{cc}(\overline{G}_{B})=1$ then $\{ \ket{\psi_k^B} \}_{k\in I}$ is an orthogonal set and Bob can distinguish the states without Alice performing any measurements.  So suppose $\mathrm{cc}(\overline{G}_{B})=2$, corresponding to cliques $V_1$ and $V_2$. We can then partition our index set $I$ into disjoint subsets $I=I_1\cup I_2\cup I_3$, where $I_1 = V_1 \cap \overline{V_2}$, $I_2 = V_2 \cap \overline{V_1}$ and $I_3 = V_1 \cap V_2$ (possibly empty). Both $\{ \ket{\psi_k^B} \}_{k\in I_1\cup I_3}$ and $\{ \ket{\psi_k^B} \}_{k\in I_2\cup I_3}$ are orthogonal subsets of $\mathbb{C}^{d} $ and with $\bk{\psi_j^B}{\psi_k^B}\neq 0$ if $j\in I_1$ and $k\in I_2$. Since the product states are orthogonal, $\bk{\psi_j^A}{\psi_k^A}= 0$ if $j\in I_1$ and $k\in I_2$.  Therefore $S_1= \mathrm{span} \{\ket{\psi_k^A}: k\in I_1 \}$ and $S_2=\mathrm{span} \{\ket{\psi_k^A}: k\in I_2 \}$ are two mutually orthogonal subspaces of $\mathbb{C}^{n}$.  Alice can thus perform corresponding projection-valued measurement which will leave Bob with an orthogonal subset of either $\{ \ket{\psi_k^B} \}_{k\in I_1\cup I_3}$ and $\{ \ket{\psi_k^B} \}_{k\in I_2\cup I_3}$ or $\{ \ket{\psi_k^B} \}_{k\in I_3}$, which he can then distinguish.
}
\end{exa}

We finish by providing a new proof of a result from \cite{kribs2021operator} (which in turn built on a result from \cite{kribs2020vector}) that identified a case, that of the `single qubit sender', in which distinguishability is completely described by graph conditions. 

\begin{cor}\label{singlequbit}
A set of orthonormal product states in $\mathbb{C}^{2} \otimes  \mathbb{C}^{d}$, for $d \geq 2$, is distinguishable via one-way LOCC with Alice going first if and only if there is some graph between the two graphs $G_A$ and $\overline{G_B}$ with clique cover number at most two; that is, there is a graph $G$ such that
\begin{equation}\label{singlequbitgraphcond}
G_A \leq G \leq \overline{G_B} \quad \mathit{and} \quad \mathrm{cc}\,(G) \leq 2.
\end{equation}
\end{cor}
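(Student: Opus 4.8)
The plan is to treat the two implications separately; the hypothesis that Alice's space is $\mathbb C^2$ enters only in the converse, which is where the real work lies. For the forward direction, suppose we are given a graph $G$ with $G_A \leq G \leq \overline{G_B}$ and $\mathrm{cc}(G) \leq 2$. I would first observe that $G$ is automatically chordal: this is exactly the argument inside the proof of Corollary~\ref{cc2cor}, namely that three consecutive edges of a chordless cycle of length at least four must lie in three different cliques, so a clique cover number of at most two rules such cycles out. The sandwich condition forces $G$ to have the common vertex set $\{1,\dots,n\}$, so Theorem~\ref{mainthm} applied to this chordal $G$ yields that every family of $n$ product states whose Alice and Bob graphs satisfy $G_A \leq G \leq \overline{G_B}$ is one-way distinguishable with Alice going first; since $\mathcal S$ is such a family, this direction is finished, using nothing about $\dim\mathcal H_A$.

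For the converse, suppose $\mathcal S$ is distinguishable as stated, and let $M = X^*X$ be the Alice Gram matrix, where $X : \mathbb C^n \to \mathcal H_A = \mathbb C^2$ sends $\ket{j} \mapsto \ket{\psi_j^A}$; in particular $\operatorname{rank}(M) \leq 2$. By Corollary~\ref{gram} I may write $M = \sum_k v_k v_k^*$ with each nonzero $v_k \in \mathbb C^n$ supported on a clique $T_k$ of $\overline{G_B}$. Comparing $(i,j)$-entries, any nonzero $M_{ij}$ places both $i$ and $j$ in a common $T_k$, so the family $\{T_k\}$ covers every edge of $G_A$. If $\operatorname{rank}(M) \leq 1$ the $v_k$ are all parallel to a single vector $w$, whose support is a clique of $\overline{G_B}$ containing every edge of $G_A$, and a single clique already suffices; so the substantive case is $\operatorname{rank}(M) = 2$, where I set $W = \operatorname{range}(M)$, a two-dimensional subspace of $\mathbb C^n$ that contains every $v_k$.

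The geometric heart of the argument is then as follows. Restricting the $i$-th coordinate functional to $W$ yields a \emph{nonzero} functional $\ell_i$ on $W$ (nonzero because $M_{ii} = \|\psi_i^A\|^2 > 0$), whose kernel is a line $p_i \subseteq W$; writing $q_k \subseteq W$ for the line spanned by $v_k$, one checks that $T_k = \operatorname{supp}(v_k) = \{\, i : p_i \neq q_k \,\}$, so each clique is the complement of a single ``line class''. I would then split on how many distinct lines occur among the $q_k$. If three distinct lines $q_{k_1}, q_{k_2}, q_{k_3}$ occur, then for any $i \neq j$ the two lines $p_i, p_j$ miss at least one of them, say $q_{k_\ell}$, whence $i, j \in T_{k_\ell}$ and therefore $\ket{\psi_i^B} \perp \ket{\psi_j^B}$; so $\overline{G_B}$ is complete and $G = \overline{G_B}$ works with $\mathrm{cc}(G) = 1$. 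If at most two distinct lines occur, then there are at most two distinct cliques $C_1, C_2$ among the $T_k$, both cliques of $\overline{G_B}$, and since $\{T_k\}$ covers the edges of $G_A$ every edge of $G_A$ lies inside $C_1$ or inside $C_2$; taking $G$ to be the graph on $\{1,\dots,n\}$ whose edges are exactly the pairs contained in $C_1$ together with those contained in $C_2$ gives $G_A \leq G \leq \overline{G_B}$ with $\mathrm{cc}(G) \leq 2$, as required.

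The step I expect to be the main obstacle is precisely this last case split, in particular the branch with three or more distinct ``forbidden directions'': a priori one only knows $\mathcal S$ is distinguishable by \emph{some} one-way protocol, possibly using many Alice outcomes, and it is not obvious how to compress such a protocol down to two outcomes. The point of the argument above is that one need not build a new protocol at all --- three distinct forbidden directions for a qubit force all of Bob's states to be mutually orthogonal, which trivializes the problem on his side. The remaining ingredients are bookkeeping: that the supports $T_k$ cover the \emph{edges} (not merely the vertices) of $G_A$, and that passing to the rank-one constituents of $M$ is harmless; both fall out of the entrywise identity $M = \sum_k v_k v_k^*$ together with positive semidefiniteness.
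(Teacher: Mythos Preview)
Your argument is correct. The direction ``$\mathrm{cc}(G)\le 2$ implies distinguishable'' is handled exactly as in the paper: $\mathrm{cc}(G)\le 2$ forces $G$ to be chordal, and then Theorem~\ref{mainthm} applies.

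For the other direction, ``distinguishable implies the existence of such a $G$'', the paper simply invokes Theorem~1 of \cite{kribs2020vector} specialised to $\dim\mathcal H_A=2$, whereas you give a self-contained argument internal to this paper. Your approach uses Corollary~\ref{gram} to get $M=\sum_k v_kv_k^*$ with each $v_k$ supported on a clique of $\overline{G_B}$, and then exploits the fact that all the $v_k$ live in the at-most-two-dimensional range $W$ of $M$. The dichotomy on the number of distinct lines $q_k\subset W$ is the genuinely new idea: three or more forbidden directions force $\overline{G_B}$ to be complete (so a single clique suffices), while exactly two directions yield exactly two cliques $C_1,C_2$ covering $G_A$. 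One small point you leave implicit is that in the rank-$2$ case there must be at least two distinct $q_k$ (else $M$ would have rank~$1$), and that with exactly two distinct lines one automatically has $C_1\cup C_2=\{1,\dots,n\}$, since each $p_i$ can equal at most one of $q_1,q_2$; this is needed so that $\{C_1,C_2\}$ really is a clique cover in the sense of the paper (covering vertices as well as edges). With that noted, your proof stands, and has the advantage of not appealing to results outside the present paper.
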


\begin{proof}
As noted in Theorem~7 of \cite{kribs2021operator}, if the states are one-way distinguishable with Alice going first, then the existence of such a $G$ and a clique cover that satisfies $\mathrm{cc}(G)\leq 2$ follows from Theorem~1 of \cite{kribs2020vector} restricted to the case $\dim \mathcal H_A =2$.   

For the converse direction, a direct constructive proof based on the different possible clique cover sizes was presented in  \cite{kribs2021operator}. Here we simply note that $\mathrm{cc}(G)\leq 2$ implies that $G$ is chordal (as in the previous result proof), and hence this direction follows from the corresponding implication of Theorem~\ref{mainthm}. 
\end{proof}

Note that the distinguishability (or not) of Examples 1 and 2 can also be seen as a direct consequence of the graph condition of this result.

\section{LOCC and Graph Representations of Minimum Rank}\label{LOCC and graph reps}

Our main theorem builds an association between decompositions of matrices in $\mathcal{S}_G$ and POVMs that initiate an LOCC discrimination protocol, and it establishes that the chordality of $G$ is equivalent to the existence of a distinguishable set of cliques for any representation of $G$. As a contrast, here we will consider graphs that are as non-chordal as possible and investigate non-distinguishability in more detail. 

In \cite{kribs2021operator}, we built an explicit POVM to distinguish chordal graphs using an elimination ordering. This suggests that perhaps the ``least chordal'' graphs will be those with no simplicial vertices. 
This includes cycles. Any vector representation of a cyclic graph $C_n$ must have dimension at least $(n-2)$. Our first result in this section relates minimum-rank representations and the distinguishability of the cycle's edges. Note that $(n-2)$ is the minimum vector rank of the graph $C_n$, meaning that the Hilbert space dimension is as small as possible for an orthogonal representation of $G$.

\begin{lem}\label{cycleLemma}
Let $G = C_n$ be a cyclic graph with $n\ge 4$, and let $\mathcal S=\{ \ket{\phi_i} \}_{i=1}^n \subset \H$ be an orthogonal representation of $G$ with $\dim \H = n-2$. 

If $\{A_k\}$ corresponds to the collection of edges of $G$, then $\{A_k\}$ is not a distinguishability set and, hence, the Gram matrix $M$ cannot be decomposed as a sum of matrices $\{M_k\}$ with support on $\{A_k\}$.
\end{lem}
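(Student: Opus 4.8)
The plan is to argue by contradiction using the equivalence from Theorem~\ref{firstmain}: suppose the collection of edges $\{A_k\}$ were a distinguishability set for $\mathcal S$. Since $\dim\H = n-2$ and the $\ket{\phi_i}$ span $\H$ (any orthogonal representation of $C_n$ in dimension $n-2$ must be spanning, as $n-2$ is the minimum vector rank), Theorem~\ref{firstmain} gives a decomposition $M = \sum_k M_k$ with each $M_k \geq 0$ nonzero and supported on the edge set $T_k = \{i,i+1\}$ (indices mod $n$). Each $M_k$ is thus a nonzero $2\times 2$ principal-block positive semidefinite matrix sitting inside an $n\times n$ matrix, and in particular has rank one or two supported on a pair of adjacent vertices.

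The key structural observation I would exploit is a rank/support count. On one hand, $\rk M = n-2$. On the other hand, $M = \sum_{k} M_k$ where the sum runs over the $n$ edges of $C_n$, and each $M_k$ has rank at most $2$ with support contained in a single edge. I would set up a dimension-counting argument: consider the images (column spaces) of the $M_k$. Because each $M_k$ is supported on an edge $\{i,i+1\}$, its image lies in the coordinate plane $\spn\{\ket{i},\ket{i+1}\}$ inside $\C^n$, but more importantly it lies inside $\mathrm{im}(M) = \mathrm{im}(X^*)$, which is an $(n-2)$-dimensional subspace. The sum of the images of the $M_k$ must equal $\mathrm{im}(M)$. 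I expect the contradiction to come from showing that if every $M_k$ is nonzero, then $\sum_k \mathrm{im}(M_k)$ is forced to be all of $\C^n$ (dimension $n$), contradicting that it sits in the $(n-2)$-dimensional space $\mathrm{im}(M)$. Concretely: each vertex $i$ lies on exactly two edges $\{i-1,i\}$ and $\{i,i+1\}$; if $M_{\{i-1,i\}}$ and $M_{\{i,i+1\}}$ are both nonzero, then together their supports ``pin down'' the $i$-th coordinate direction in a way that, summed around the cycle, recovers every standard basis vector.

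The cleanest way to make this precise is probably via the Gram-matrix diagonal. Normalize so the diagonal of $M$ is all ones (the representation is by unit vectors, or rescale as in the proof of Theorem~\ref{mainthm}). Then $1 = m_{ii} = \sum_k (M_k)_{ii}$, and only the two $M_k$ with $i$ in their support contribute. Since each $M_k$ is PSD and supported on an edge, a nonzero $M_k$ on edge $\{i,i+1\}$ has $(M_k)_{ii} > 0$ and $(M_k)_{i+1,i+1} > 0$ (a PSD matrix with a zero diagonal entry has the whole corresponding row and column zero, so if $(M_k)_{ii}=0$ then $M_k$ would be supported on the single vertex $i+1$, not on the edge — but ``supported on $T_k$'' here I read as support $\subseteq T_k$, so I need to handle the degenerate case where $M_k$ collapses to a single diagonal entry). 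This is the one delicate point. I would argue that if some $M_k$ collapses to a single vertex, then that vertex's two edges are effectively covered by a clique cover of $C_n$ using only $n-1$ of the ``pieces'' behaving like edges plus singletons — and then invoke that a clique cover of $C_n$ needs at least $3$ cliques in a way incompatible with the dimension $n-2$; alternatively, and more robustly, I would pass directly to the known fact that $C_n$ ($n\geq4$) is not chordal together with Corollary~\ref{gram} and the minimum-rank hypothesis. Indeed, in dimension exactly $n-2$ the Alice components realize an extremal (minimum-rank) Gram matrix $M\in\mathcal S_{C_n}$, and Lemma~\ref{jfaresult} says such an $M$ need not decompose into rank-one pieces in $\mathcal S_{C_n}$; one shows the minimum-rank representation is precisely a witness where it does not.

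So the main obstacle is the degenerate/boundary case: ruling out decompositions $M=\sum_k M_k$ in which some $M_k$ has rank one with support a single vertex or rank two but with images conspiring to stay low-dimensional. My plan is to handle this by the coordinate/diagonal counting above — showing $\dim\sum_k\mathrm{im}(M_k)=n$ whenever all $M_k\neq 0$ and are spread over the $n$ edges of the cycle — contradicting $\dim\mathrm{im}(M)=n-2$; and the final ``hence'' clause is then immediate, since a decomposition of $M$ with supports on $\{A_k\}$ would, by Theorem~\ref{firstmain} (direction (2)$\Rightarrow$(1)), exhibit $\{A_k\}$ as a distinguishability set, which we have just contradicted.
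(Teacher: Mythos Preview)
Your central plan does not work. You aim to derive a contradiction by showing $\dim\sum_k \mathrm{im}(M_k)=n$, but this is false: since $0\le M_k\le M$ we have $\mathrm{im}(M_k)\subseteq\mathrm{im}(M)$ for every $k$, and since $M=\sum_k M_k$ we have $\mathrm{im}(M)\subseteq\sum_k\mathrm{im}(M_k)$; hence $\sum_k\mathrm{im}(M_k)=\mathrm{im}(M)$ has dimension exactly $n-2$, which is consistent rather than contradictory. Concretely, nonzero rank-one PSD matrices supported on the edges of $C_n$ can easily have images whose span is a proper subspace of $\C^n$ (e.g.\ taking $M_k=(e_i-e_{i+1})(e_i-e_{i+1})^*$ gives a total span of dimension $n-1$), so the ``coordinate/diagonal counting'' cannot force the span to be all of $\C^n$. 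Your fallback via Lemma~\ref{jfaresult} also has a gap: that lemma only guarantees the \emph{existence} of some PSD matrix in $\mathcal S_{C_n}$ that fails to decompose into rank-one pieces; it says nothing about the particular Gram matrix $M$ arising from the given representation, and you do not supply the missing argument that ``the minimum-rank representation is precisely a witness.''

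The paper's proof takes a completely different and much more direct route, working with the POVM itself rather than the Gram decomposition. The key fact (from the theory of OS-sets for trees, cf.\ \cite{hackney2009linearly}) is that in any orthogonal representation of a path $P_{n-1}$, any $n-2$ of the vectors are linearly independent. Removing one vertex from $C_n$ yields $P_{n-1}$, so for any edge $A_k=\{v_i,v_{i+1}\}$ the remaining $n-2$ states $\{\ket{\phi_j}:j\ne i,i+1\}$ are linearly independent and hence span the $(n-2)$-dimensional space $\H$. Any POVM element $E_k$ distinguishing $A_k$ must annihilate all of these and therefore $E_k=0$; applying this to every edge shows no such POVM exists. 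The Gram-matrix statement then follows from Theorem~\ref{firstmain}. The missing idea in your attempt is precisely this linear-independence property of the complement of an edge, which replaces your failed dimension count.
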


\begin{proof}
We first note that every induced subgraph of $C_n$ with $(n-1)$ vertices is the path $P_{n-1}$, and any orthogonal representation of $P_{n-1}$ has the property that any $(n-2)$ of the states are linearly independent. This follows from Corollary 3.9 of \cite{hackney2009linearly}, which states that any $|T|-1$ vertices in a tree $T$ form an OS-set, which implies that the corresponding vertex states must be linearly independent. (It was also stated in \cite{barioli2010zero} that any one vertex in a tree forms a positive semidefinite zero-forcing set, implying that the complement must form an OS-set.)

For completeness, we include a proof that vectors representing any $(|T|-1)$ vertices of a tree $T$ must be linearly independent. Suppose we have a minimal linear dependence among the $\{\phi(v)\}$, $\sum_i k_i \phi(v_i)=0$, and let $H$ be the subgraph of the tree induced by the $v_i$ for which the corresponding $k_i$ is nonzero. Since the states from different connected components of $H$ would be mutually orthogonal to one another, $H$ must be connected by minimality. Now let $u$ be a vertex adjacent to $H$ but not in $H$. Then there is a unique $w$ in $H$ adjacent to $u$. So let $\mathcal V$ be the orthogonal complement of $\phi(u)$.  Then $\phi(w)$ is not in $\mathcal V$ but $\phi(v_i)$ is in $\mathcal V$ for all the other vertex states $v_i$ in $H$, which means $\phi(w)$ cannot be part of the linear dependence, a contradiction.

Returning to our cycle $C_n$: Suppose $A_1 = \{v_1,v_2\}$ corresponds to the edge $(v_1,v_2)$. By the above argument, the $(n-2)$ states corresponding to the remaining vertices are linearly independent and hence span $\H$. If a POVM could distinguish $A_1$, then $E_1\ket{\phi_i} = 0$ for all $i >2$. Since these states span $\H$, this  implies that $E_1 = 0$. The same applies to every operator that makes up the POVM, and so no such POVM exists. Hence the edge sets are not distinguishable. 

The fact the $M$ cannot be decomposed follows from Theorem~\ref{firstmain}.
\end{proof}

This suggests a generalization. To make it easier to state, we set the following definitions.
\begin{defn}
For a graph $G$, let $$\mathcal{S}_G^+ := \{ M \in \mathcal{S}_G: M \ge 0, \,\, \Delta(M) \mbox{ is invertible}\},$$ where $\Delta$ maps the matrix $M$ to its diagonal, and define
$$\eta_+(G) := \min\{ \rk(M): M \in \mathcal{S}_G^+\}.$$
\end{defn}

In other words, if $n= |G|$, $\mathcal{S}_G^+$ is the set of $n \times n$ positive semidefinite matrices in $\mathcal{S}_G$ whose support is all of $[n]$. Every element of $\mathcal{S}_G^+$ can be written as $X^*X$, where the columns of $X$ define an orthogonal representation of a subgraph of $G$ that includes all the vertices of $G$.

Because $\mathcal{S}_G^+$ no longer contains the zero matrix, we can consider the minimum rank $\eta_+(G)$ as a possibly meaningful graph parameter. Our minimum rank parameter is similar to the Haemers number of the graph $G$\cite{haemers1978upper}, originally introduced as an upper bound of the Shannon capacity $\Theta(G)$ and equivalent to the minimum rank over matrices in $\mathcal{S}_G$ with no zeroes on the diagonal. Following the notation of  \cite{hogben2013handbook}, the Haemers number of $G$ is written $\eta(G)$, so here we adopt the notation $\eta_+(G)$ as the minimum rank when restricted to positive semidefinite matrices. 

This allows us to state our results in more graph theoretical terms. Recall that $\alpha(G)$ is the size of the largest {\it independent set} in $G$; that is, subsets of vertices of $G$ such that no two of which are connected by edges of $G$ (so independent sets in $G$ form cliques in the complement of $G$). If $G$ has an independent set of size $\alpha = \alpha(G)$, then every $M \in {\mathcal S}_G^+$ can be written with an $\alpha \times \alpha$ diagonal matrix as its upper left block, perhaps after relabelling the vertices, which implies that $\rk (M) \ge \alpha$. On the flip side, every legal coloring of $\overline{G}$ of size $k$ corresponds to a rank $k$ element of ${\mathcal S}_G^+$ (an explicit construction is given in \cite{haemers1978upper}); so the graph coloring (chromatic) number $\chi(\overline{G})$ is an upper bound for $\eta_+(G)$.

This gives the following lemma, where the first inequality follows from the definition of $\Theta(G)$, the second is a result from \cite{haemers1978upper}, the third is trivial, and the fourth follows from the discussion above. 

\begin{lem}\label{lemmainequality}
For any graph $G$, \begin{equation*} \alpha(G) \le \Theta(G) \le \eta(G) \le \eta_+(G) \le \chi(\overline{G}) . 
\end{equation*}
\end{lem}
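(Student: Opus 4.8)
The plan is to verify the chain of four inequalities in Lemma~\ref{lemmainequality} one link at a time, since the text has already signposted the origin of each. The objects involved are: $\alpha(G)$ the independence number, $\Theta(G)$ the Shannon capacity, $\eta(G)$ the Haemers number (minimum rank over $\mathcal S_G$ with nonzero diagonal), $\eta_+(G)$ our positive semidefinite restriction, and $\chi(\overline G)$ the chromatic number of the complement. The only genuinely new content is the two outer links; the middle two are quoted or immediate.

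\textbf{Step 1: $\alpha(G) \le \Theta(G)$.} This is immediate from the definition of the Shannon capacity as $\Theta(G) = \sup_k \sqrt[k]{\alpha(G^{\boxtimes k})}$ (strong product), together with $\alpha(G^{\boxtimes k}) \ge \alpha(G)^k$: an independent set in $G$ of size $\alpha$ gives, by taking $k$-tuples, an independent set in $G^{\boxtimes k}$ of size $\alpha^k$. Taking $k$-th roots and the supremum yields the bound. I would state this in one or two sentences, citing the standard definition.

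\textbf{Step 2: $\Theta(G) \le \eta(G)$.} This is exactly Haemers' theorem \cite{haemers1978upper}: the rank bound $\eta(G)$ (he works with any field, but we only need one) is an upper bound for the Shannon capacity. I simply invoke it.

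\textbf{Step 3: $\eta(G) \le \eta_+(G)$.} Trivial: $\mathcal S_G^+ \subseteq \{ M \in \mathcal S_G : \Delta(M) \text{ invertible}\}$, so the minimum rank over the smaller (positive semidefinite) set can only be larger. One sentence.

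\textbf{Step 4: $\eta_+(G) \le \chi(\overline G)$.} This is the construction already sketched in the paragraph preceding the lemma, which I would make precise. Let $k = \chi(\overline G)$ and fix a proper $k$-colouring of $\overline G$, i.e. a partition $V = C_1 \sqcup \cdots \sqcup C_k$ where each $C_t$ is a clique in $\overline G$, equivalently an independent set in $G$. Pick $k$ unit vectors $e_1, \dots, e_k$ that are pairwise non-orthogonal but otherwise generic in $\mathbb C^k$ — in fact one may take any vectors with all pairwise inner products nonzero, e.g. $e_t = $ (suitably perturbed standard basis vectors), following the explicit recipe of \cite{haemers1978upper}. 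Assign to vertex $v \in C_t$ the vector $\phi(v) = e_t$. Form $X$ with columns $\phi(v)$ and set $M = X^*X$. Then $\operatorname{rk} M \le k$ since $X$ has only $k$ distinct columns (so rank at most $k$). The diagonal of $M$ is $\|\phi(v)\|^2 = 1 \ne 0$ for every $v$, so $\Delta(M)$ is invertible. Finally, if $(v,w) \notin E$ with $v \ne w$, then $v,w$ lie in a common independent set of $G$, hence in a common colour class $C_t$, so $\phi(v) = \phi(w) = e_t$ and the $(v,w)$ entry of $M$ is $\langle e_t, e_t\rangle = 1 \ne 0$ — wait, that is the wrong direction. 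Re-examine: $\mathcal S_G$ requires $m_{ij} = 0$ when $(i,j) \notin E$. So I must instead choose the $e_t$ to be \emph{pairwise orthogonal}? No — then distinct colour classes would need orthogonal vectors but $M$'s zero pattern would be wrong for non-edges \emph{within} a class. The correct construction: assign to $v \in C_t$ the vector $\phi(v) = f_v \otimes e_t$ is overkill; the standard Haemers construction is simply to take $\phi(v) = e_t$ for $v$ in colour class $C_t$ with the $e_t$ pairwise \emph{orthogonal} — but then for $v \ne w$ in the same class, $m_{vw} = \langle e_t, e_t \rangle = 1 \ne 0$, yet $(v,w)$ may be a non-edge of $G$, violating $\mathcal S_G$. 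Hence one must refine $\phi$ within each class. The clean fix: within colour class $C_t$ (an independent set of $G$, so the induced subgraph of $G$ on $C_t$ is edgeless, meaning \emph{all} pairs in $C_t$ are non-edges), we need pairwise-orthogonal vectors; combined across classes we need cross-class inner products to be allowed to be nonzero. Take $\phi(v)$ for $v$ in class $C_t$ to be $g_v^{(t)} \oplus 0 \oplus \cdots$ placed in the $t$-th of $k$ orthogonal blocks, where within block $t$ the $g_v^{(t)}$ are an orthonormal basis of a $|C_t|$-dimensional space — but that gives rank $\sum |C_t| = n$, not $k$. I see the resolution now: the Haemers number genuinely uses that inner products of $\phi(v), \phi(w)$ for \emph{non-edges} must be zero, so the \emph{correct} colouring-based bound is via a representation where $\phi(v)$ for $v \in C_t$ all equal a single vector $e_t$, and we need $\langle e_t, e_t\rangle$... no. Let me restate: $\mathcal S_G^+$ forces zeros off the edge set. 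A proper colouring of $\overline G$ has colour classes that are \emph{cliques of $\overline G$} = independent sets of $G$ = sets all of whose internal pairs are \emph{non-edges of $G$}. So within a class, off-diagonal entries of $M$ must vanish: we need pairwise orthogonal vectors within each class. Across classes (which are edges \emph{or} non-edges of $G$) entries are unconstrained.

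\textbf{The main obstacle} is thus getting Step~4's construction exactly right: the resolution is the standard one, namely that a rank-$k$ element of $\mathcal S_G^+$ is obtained by choosing, for each colour $t \in [k]$, a vector $u_t$ in $\mathbb C^k$, and for vertex $v$ in class $C_t$ setting the $v$-th column of $X$ to be $u_t$ scaled by generic distinct nonzero scalars — no. The genuinely correct statement (and what \cite{haemers1978upper} proves) is: $\eta_+(G) \le \chi(\overline G)$ because a proper $k$-colouring of $\overline G$ is the \emph{same thing} as a partition of $V$ into $k$ cliques of $\overline G$, and assigning the standard basis vector $\varepsilon_t \in \mathbb C^k$ to every vertex in clique $t$ gives a matrix $M$ with $m_{vw} = \langle \varepsilon_{t(v)}, \varepsilon_{t(w)}\rangle = \delta_{t(v),t(w)}$; this is $1$ on the diagonal (good, $\Delta(M)$ invertible), and $0$ whenever $v,w$ lie in \emph{different} classes. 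Since $v,w$ in the \emph{same} class means $v,w$ both in a clique of $\overline G$, i.e. a non-edge... so $m_{vw}=1$ for a non-edge, which is \emph{not} allowed. I conclude the intended bound must actually be with $\mathcal S_G$ meaning "$0$ off the edge set of $\overline G$" in some conventions — but the paper fixed the convention as "$a_{ij} = 0$ if $(i,j) \notin E$". Given the paper \emph{asserts} this inequality holds under its own conventions and attributes the construction to \cite{haemers1978upper}, in the actual write-up I will follow their sign convention carefully: assign to each vertex in colour class $C_t$ the same vector $e_t$ where $\{e_1,\dots,e_k\}$ are chosen with $\langle e_s, e_t\rangle \ne 0$ for $s \ne t$ and $\langle e_t,e_t\rangle = 1$ — and note that a colouring of $\overline G$ has classes that are independent sets \emph{of $\overline G$}, i.e. \emph{cliques of $G$}, so within a class all pairs are \emph{edges} of $G$ (entries unconstrained), and across classes... one still needs those to be nonzero only on edges. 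This finally works if we additionally ensure the colouring is chosen so that cross-class pairs that are non-edges of $G$... are impossible, which they aren't in general. I will therefore, in the final version, simply cite \cite{haemers1978upper} for the precise construction and the inequality $\eta_+(G) \le \chi(\overline G)$ rather than reproduce it, since the four inequalities are each either definitional, cited, or trivial, and a careful reader is referred to the sources; the role of Lemma~\ref{lemmainequality} in this paper is purely to situate $\eta_+(G)$ among known parameters.
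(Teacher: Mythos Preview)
Your final plan---cite each inequality to its source and move on---is exactly what the paper does: the lemma is stated without formal proof, with the preceding paragraph attributing the four links respectively to the definition of $\Theta(G)$, to Haemers' theorem, to the trivial inclusion $\mathcal S_G^+ \subseteq \{M \in \mathcal S_G : \Delta(M)\ \text{invertible}\}$, and to a rank-$k$ construction referred to \cite{haemers1978upper}.

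Your Step~4 reasoning, however, is worth untangling, because the construction does work cleanly once two slips are corrected. First, a proper colouring of $\overline G$ partitions $V$ into independent sets of $\overline G$, which are \emph{cliques of $G$}, not independent sets of $G$; you had this backward at the outset and only righted it near the end. Second, under the paper's operator-system convention $\mathcal S_G = \{[a_{ij}] : a_{ij} = 0\ \text{if}\ (i,j) \notin E\}$, entries on edges are unconstrained and may in particular be zero---there is no requirement that $a_{ij} \ne 0$ for $(i,j) \in E$. With both points in hand: take the standard orthonormal basis $e_1,\dots,e_k$ of $\mathbb C^k$, assign $\phi(v) = e_{c(v)}$ where $c$ is the colouring, and set $M = X^*X$. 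Then $m_{vw} = \delta_{c(v),c(w)}$; the diagonal is all ones, and for $v \ne w$ with $(v,w) \notin E$ the pair cannot lie in a common clique of $G$, hence $c(v) \ne c(w)$ and $m_{vw} = 0$. Cross-class pairs that happen to be edges of $G$ receive the value zero, which is permitted. Thus $M \in \mathcal S_G^+$ with $\rk M \le k$, giving $\eta_+(G) \le \chi(\overline G)$ directly.
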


We use this notation in the following. Note that Lemma \ref{cycleLemma} addresses this situation in the case that $G$ is a cycle. 

\begin{thm}\label{NoSimplicial} 
Let $\mathcal{S}$ be a collection of orthonormal product states in $\H_A \ot \H_B$, with associated graphs $G_A$ and $G_B$, and suppose that $G = \overline{G_B}$ has no simplicial vertices.

If $\dim \H_A = \eta_+(G)$, then the product states of $\mathcal{S}$ cannot be distinguished via one-way LOCC with Alice going first.
\end{thm}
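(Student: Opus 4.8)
The plan is to argue by contradiction. Suppose the states of $\mathcal{S}$ are distinguishable via one-way LOCC with Alice going first, and let $M$ be their Alice Gram matrix; after rescaling the two tensor factors of each state we may assume every diagonal entry of $M$ equals $1$. Then $M \in \mathcal{S}_{G_A}^+ \subseteq \mathcal{S}_G^+$ (where $G = \overline{G_B}$), and since $\rk M \le \dim \H_A = \eta_+(G)$ while $\rk M \ge \eta_+(G)$ by the definition of $\eta_+$, in fact $\rk M = \eta_+(G)$: the matrix $M$ attains the minimum rank over $\mathcal{S}_G^+$. By Corollary~\ref{gram} applied on Alice's side, $M = \sum_{k=1}^m M_k$ where each $M_k$ is a non-zero rank-one positive semidefinite matrix whose support is a clique $T_k$ of $G$.

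Next I would transport this decomposition into $\H_A$, mimicking the proof of Theorem~\ref{firstmain}. Write $M = X^*X$ with $X e_j = \ket{\phi_j^A}$ (unit vectors spanning $\H_A$). Each $M_k = \kb{u_k}{u_k}$ with $u_k$ in the range of $X^*$, so $u_k = X^*\xi_k$ for a unique $\xi_k \in \H_A$; since $X$ is onto, $\sum_k \kb{\xi_k}{\xi_k} = I_{\H_A}$, and the support condition on $u_k$ becomes $\xi_k \perp \ket{\phi_j^A}$ for every $j \notin T_k$. Everything now reduces to the claim that \emph{for every clique $T$ of $G$ the set $\{\ket{\phi_j^A}: j \notin T\}$ spans $\H_A$}: granting this, each $\xi_k$ is orthogonal to a spanning set, hence $\xi_k = 0$ for all $k$, contradicting $\sum_k \kb{\xi_k}{\xi_k} = I_{\H_A}$.

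To establish the claim I would use the projection device behind the definition of $\eta_+$. Fix a clique $T$ of $G$ and suppose $V := \spn\{\ket{\phi_j^A}: j \notin T\}$ were a proper subspace of $\H_A$, so $\dim V \le \eta_+(G) - 1$. Let $P$ be the orthogonal projection onto $V$ and set $\ket{\phi_j'} = \ket{\phi_j^A}$ for $j \notin T$ and $\ket{\phi_i'} = P\ket{\phi_i^A}$ for $i \in T$. Its Gram matrix $M'$ agrees with $M$ outside the $T \times T$ block (here one uses that $\ket{\phi_j^A}\in V$ for $j \notin T$), and that block is unconstrained because $T$ is a clique of $G$; so $M' \in \mathcal{S}_G$, $M' \ge 0$, and $\rk M' \le \dim V < \eta_+(G)$. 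If every $\ket{\phi_i'}$ were non-zero, then $M'$ would lie in $\mathcal{S}_G^+$ and violate the definition of $\eta_+(G)$. Hence the claim can fail only if some $i_0 \in T$ satisfies $\ket{\phi_{i_0}^A} \perp \ket{\phi_j^A}$ for all $j \notin T$.

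I expect this last case to be the main obstacle, and it is where the hypothesis on $G$ must do its work. Such an $i_0$ has all of its $G_A$-neighbours inside the clique $T \setminus \{i_0\}$ of $G$; since $G$ has no simplicial vertex, $N_G(i_0)$ is not a clique of $G$, which forces $i_0$ to have a $G$-neighbour $j_0 \notin T$, and then $\ket{\phi_{i_0}^A}\perp\ket{\phi_{j_0}^A}$ exhibits an edge of $G$ that is missing from $G_A$. The remaining task is to convert this gap between $G = \overline{G_B}$ and the possibly strictly smaller Alice graph $G_A$ into a contradiction — for instance by deleting $i_0$ from both the graph and the offending clique and iterating, by taking $T$ minimal among the cliques witnessing failure, or via a minimal-counterexample argument on the pair $(G,\mathcal{S})$. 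Carrying out this reduction is the delicate point; the surrounding steps are routine given Theorem~\ref{firstmain} and Corollary~\ref{gram}.
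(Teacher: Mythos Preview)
Your plan is the paper's argument in all but one detail. Both proofs assume a distinguishing one-way protocol exists, break Alice's measurement into rank-one operators (your $\xi_k$, the paper's $|\varphi_k\rangle$) each of which selects a clique of $G=\overline{G_B}$, and then try to produce an element of $\mathcal S_G^+$ of rank strictly below $\eta_+(G)$ to contradict minimality.

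The divergence is in the projection step, and it is exactly the step you flag as delicate. You try to prove the strong claim that for every clique $T$ the Alice states indexed by the complement of $T$ span $\H_A$, by projecting all states onto $V=\spn\{\ket{\phi_j^A}:j\notin T\}$ and checking nothing dies. The paper makes a milder move: it fixes a single POVM vector $|\varphi_1\rangle$ (one of your $\xi_k$) and projects onto the hyperplane $|\varphi_1\rangle^\perp$. That lowers the rank by only one, which is already enough for the contradiction, and the survival condition is correspondingly weaker: a state $\ket{\phi_j^A}$ is annihilated only if it is a \emph{scalar multiple} of $|\varphi_1\rangle$, rather than merely lying in the possibly larger subspace $V^\perp$. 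Since $|\varphi_1\rangle\in V^\perp$, the paper's obstacle is strictly contained in yours. The paper then invokes the no-simplicial-vertex hypothesis once: every vertex has a $G$-neighbour outside the clique selected by $|\varphi_1\rangle$, and the paper concludes from this that no Alice state is parallel to $|\varphi_1\rangle$, so the projected Gram matrix sits in $\mathcal S_G^+$ with rank $<\eta_+(G)$.

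Your instinct that the case $G_A\lneq\overline{G_B}$ is where the difficulty lies is sound: the step from ``$v_j$ has a $G$-neighbour outside the clique'' to ``$\ket{\phi_j^A}$ has a nonzero component in $|\varphi_1\rangle^\perp$'' uses an edge of $G$, not of $G_A$, and the paper passes over this briskly. Nevertheless, replacing your projection onto $V$ by the codimension-one projection onto $|\varphi_1\rangle^\perp$ is the simplification you are missing; it collapses the iterative deletion scheme you sketch into a single ``orthogonal vertex removal'' of the auxiliary vertex $w_1$ attached to $|\varphi_1\rangle$.
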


%We have shown this is true if $G$ is a cycle. We also note that if $G$ has no simplicial vertex, then the $mr_+(G) < cc(G)$, so the states cannot be distinguished by projecting onto an orthonormal basis. [See proof below]

%So the question is, if we have a POVM that distinguishes our states, can we use it to reduce the dimension of our representation? 

%We danced around these issues in the earlier paper, it feels like there should be some answers there. 

\begin{proof}
    We prove by contradiction. Suppose there exists a POVM with operator elements $\{a_k \kb{\varphi_k}{\varphi_k}\}_k$ operating on Alice's states $\{\ket{\psi_j^A} \}_j$  that forms the first step in a one-way LOCC protocol to distinguish the states of $\mathcal{S}$. (The operators can be assumed to be rank one without loss of generality.) We construct a new graph $G'$ by adding an additional vertex $w_1$ to $G$ corresponding to $\ket{\varphi_1}$, with $w_1 \sim v_j$ if and only if $\bk{\varphi_1}{\psi_j^A} \ne 0$. Because $G_B = \overline{G} \geq G_A$, it follows that $w_1$ must be a simplicial vertex in order to initiate a one-way LOCC discrimination; i.e., Bob must be able to distinguish the states corresponding to the neighbor vertices of $w_1$, as Alice gains no information on them. 
    
    Since no vertex in $G$ is simplicial, every vertex has a neighbor outside the clique neighborhood of $w_1$, and hence each state has a component in the orthogonal complement of $\ket{\varphi_1}$.  We can perform an orthogonal vertex removal of $w_1$ \cite{hackney2009linearly} by projecting onto the orthogonal complement of $\ket{\varphi_1}$, getting a nonzero output for each $\ket{\psi_j^A}$. This creates a new set of Alice states $\{\ket{\phi_j^A} \}$ such that $\bk{\phi_i^A}{\phi_j^A} = \bk{\psi_i^A}{\psi_j^A}$ unless both $v_i$ and $v_j$ are neighbors of $w_1$ and thus neighbors of each other. This means that for any vertices in $G$ such that $v_i \not\sim v_j$, $\bk{\phi_i^A}{\phi_j^A} = \bk{\psi_i^A}{\psi_j^A} = 0$.

    We can now define the new Alice Gram matrix $M^\prime$ whose $(i,j)$ entry is $\bk{\phi_i^A}{\phi_j^A}$. It is clear that $M^\prime \in {\mathcal S}_G^+$ and that $\rk(M') < \dim \H_A$. This contradicts our assumption that $\rk(M)$ was minimal and thus proves the result. 
\end{proof}

Note that we never used the entire POVM, just the fact that there exists an operator that selects one clique with the property that none of its vertices are simplicial. If we extend this reasoning to the entire POVM, we see that a minimum rank representation cannot be distinguished with one-way LOCC with Alice going first unless $G$ contains a set of $\eta_+(G)$ simplicial vertices that are pairwise non-adjacent.  We use this idea and the fact that $\eta_+(G_A) \le \eta_+(\overline{G_B}) \le \chi(G_B)$ in the following theorem, which is our main result in this section.

\begin{thm}\label{ConverseTheorem} Let $\mathcal{S}$ be a collection of orthonormal product states in $\H_A \ot \H_B$, with associated graphs $G_A$ and $G_B$, such that $\dim\H_A = \chi(G_B)$. 

If the elements of $\mathcal{S}$ are distinguishable via one-way LOCC with Alice going first, then: 
\begin{enumerate}
    \item Alice's measurement must consist of projections onto an orthogonal basis of $\H_A$.
    \item{}\label{WeaklyPerfect} $ \alpha(\overline{G_B}) = \chi(G_B) \le \alpha(G_A)$, and $G_B$ is a weakly perfect graph.
    \item{} Alice's measurement is unique. 
    \item{} \label{ChordalCondition}There exists a chordal graph $G$ with $G_A \le G \le \overline{G_B}$.
    \item{} Every set of product states represented by the graphs $G_A$ and $G_B$ is distinguishable via one-way LOCC with Alice going first. 
\end{enumerate}
\end{thm}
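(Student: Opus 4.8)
The plan is to leverage the observation in the paragraph preceding the theorem: in a minimum-dimension situation, any one-way LOCC protocol with Alice going first must, via Corollary~\ref{gram}, decompose the Alice Gram matrix $M$ into rank-one positive semidefinite pieces lying in $\mathcal{S}_{\overline{G_B}}$, and each such piece corresponds to a clique in $G = \overline{G_B}$ whose vertices must be simplicial (by the reasoning in the proof of Theorem~\ref{NoSimplicial}: Bob gains nothing extra, so the neighbourhood of the added vertex $w_1$ must be a clique). First I would fix the hypothesis $\dim\H_A = \chi(G_B) = \chi(\overline{G_B}\,^c)$ and combine it with Lemma~\ref{lemmainequality} applied to $G = \overline{G_B}$: since $M \in \mathcal{S}_G^+$ we have $\operatorname{rk}(M) \ge \eta_+(G) \ge \alpha(G) = \alpha(\overline{G_B})$, while $\dim\H_A = \chi(G_B) = \chi(\overline{G})$ is an upper bound for $\eta_+(G)$; if the states are distinguishable with Alice first, the extension-of-the-argument remark forces $G$ to contain $\eta_+(G)$ pairwise non-adjacent simplicial vertices. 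But $\eta_+(G) \le \chi(\overline G) = \dim\H_A$ and $\operatorname{rk}(M) \le \dim\H_A$, so all the inequalities in Lemma~\ref{lemmainequality} must be equalities down to $\alpha(\overline{G_B}) = \chi(G_B)$; this gives item~(2)'s first equation and the definition of $G_B$ being weakly perfect (i.e. $\chi = \omega$ on $\overline{G_B}$, equivalently on $G_B$). The inequality $\chi(G_B) \le \alpha(G_A)$ then follows because $G_A \le \overline{G_B}$ forces $\alpha(G_A) \ge \alpha(\overline{G_B})$ — wait, that is the wrong direction; instead $\alpha(G_A)\ge \alpha(\overline{G_B})$ needs $G_A \le \overline{G_B}$ so that independent sets of $\overline{G_B}$ are independent in $G_A$, which is exactly $G_A \le \overline{G_B}$: every non-edge of $\overline{G_B}$ is a non-edge of $G_A$, so an independent set of $\overline{G_B}$ is independent in $G_A$. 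Good.

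Next, for items~(1) and~(3): the forced set of $\eta_+(G) = \dim\H_A$ pairwise non-adjacent simplicial vertices realizing the minimum rank means $M$ has a $(\dim\H_A)\times(\dim\H_A)$ diagonal principal submatrix that already achieves the rank of $M$; since $M$ is positive semidefinite of rank exactly $\dim\H_A$ with an invertible diagonal block of that size, the entire column space is spanned by those coordinate directions and $M$ is block-diagonalizable so that the rank-one decomposition is forced to be into multiples of the coordinate vectors $\ket{i}$ for $i$ in the simplicial independent set. Translating through the operator $X$ with $X^*X = M$ (as in Theorem~\ref{firstmain}), the POVM elements $E_k = \sum_i \mu_{k,i}^2\kb{\varphi_{k,i}}{\varphi_{k,i}}$ are forced to be rank-one projections onto an orthonormal basis of $\H_A$, and the uniqueness of the decomposition (up to the labelling already fixed by which vertices are simplicial) yields uniqueness of Alice's measurement. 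I would spell out the uniqueness by noting that a rank-one PSD decomposition of a PSD matrix whose support contains a same-size invertible diagonal block is unique up to reordering, because each rank-one term supported on a singleton clique is pinned down by matching diagonal entries.

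For item~(4): once we know Alice's measurement is an orthonormal basis $\{\ket{e_i}\}$ with the $i$-th outcome telling Bob the state is among the clique $T_i = \{j : \bk{e_i}{\psi^A_j}\ne 0\}$, and each $T_i$ is a clique in $G = \overline{G_B}$, the union of these $\dim\H_A$ cliques covers every vertex (since $\sum E_i = I$ means every $\ket{\psi^A_j}$ has a nonzero component in some $\ket{e_i}$) and covers every edge of $G_A$ (an edge of $G_A$ is a non-edge of $G_B$; for distinguishability Bob must separate those states, which only happens if they lie in a common clique $T_i$ — actually more carefully, the relevant fact is that the graph $G' := \bigcup_i T_i$ satisfies $G_A \le G' \le \overline{G_B}$, and $G'$ is chordal because it has a clique cover of size $\mathrm{cc}(G') \le \dim\H_A = \chi(G_B) = \omega(\overline{G_B})$, and any graph whose clique cover number equals its clique number must be a disjoint-union-friendly chordal graph). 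Actually the cleanest route to chordality of $G'$: it is the union of $\chi(G_B)$ cliques, and since $\chi(G_B) = \alpha(\overline{G_B}) = \alpha(G')$ (the inequalities squeeze), $G'$ has clique cover number equal to its independence number, and a classical fact (a graph with $\mathrm{cc}(G') = \alpha(G')$ obtained from a proper colouring of the complement into independent-transversal cliques) makes it a chordal — more safely, I would invoke that $G'$ admits a clique cover indexed by an orthonormal basis, which is precisely the setup in which Lemma~\ref{jfaresult}/Theorem~\ref{mainthm} force chordality via the existence of the decomposition. I will pick whichever of these is airtight; the potential snag is exactly this point, establishing chordality of the sandwiched graph $G'$ rather than merely bounding its clique cover number, and I expect that to be the main obstacle — I would resolve it by showing $G'$ itself satisfies the "every PSD matrix in $\mathcal{S}_{G'}$ decomposes" property for the specific $M$ and leveraging the minimality/equality chain to upgrade to all such matrices, or alternatively by a direct perfect-elimination-ordering argument using the simplicial vertices.

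Finally, item~(5) is then immediate from item~(4) together with Theorem~\ref{mainthm}: having produced a chordal $G$ with $G_A \le G \le \overline{G_B}$, every collection of product states with those Alice and Bob graphs is one-way distinguishable with Alice going first. So the overall structure is: squeeze the Lemma~\ref{lemmainequality} chain to equalities using the hypothesis and the simplicial-vertex necessity; deduce (2); use the invertible diagonal block of size $=$ rank to force the rank-one decomposition and hence (1) and (3); read off the clique cover of a sandwiched graph $G'$ and argue it is chordal for (4); and invoke Theorem~\ref{mainthm} for (5).
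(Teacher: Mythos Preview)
Your plan has a genuine gap at the very first step, the squeeze on Lemma~\ref{lemmainequality}. The remark you invoke --- that a minimum-rank representation cannot be distinguished unless $G=\overline{G_B}$ contains $\eta_+(G)$ pairwise non-adjacent simplicial vertices --- is explicitly conditional on $\dim\H_A=\eta_+(\overline{G_B})$. The hypothesis here is only $\dim\H_A=\chi(G_B)$, and nothing rules out $\eta_+(\overline{G_B})<\chi(G_B)$ a priori. Without knowing the representation is minimum-rank you cannot apply the remark, and without the remark you cannot force the chain $\alpha(\overline{G_B})\le\eta_+(\overline{G_B})\le\chi(G_B)$ to collapse. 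This circularity undermines your derivation of item~(2); and since your arguments for (1) and (3) rest on having an invertible diagonal principal block of size equal to $\rk(M)$ (which you only get once the squeeze has closed and once you have identified an independent set in $G_A$, not merely in $\overline{G_B}$), those are unsupported as well. Your route to (4) you already flag as the ``potential snag'': neither of your proposed fixes works --- $\mathrm{cc}(G')=\alpha(G')$ does not imply chordality, and decomposing one particular $M$ in $\mathcal S_{G'}$ says nothing about all of $\mathcal S_{G'}^+$.

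The paper avoids the minimum-rank hypothesis entirely by a direct colouring argument. Starting from any rank-one POVM $\{a_k\kb{\varphi_k}{\varphi_k}\}$ that initiates the protocol, it selects $d=\dim\H_A$ of the $\ket{\varphi_k}$ spanning $\H_A$, sets $V_k=\{i:\bk{\varphi_k}{\psi_i^A}\ne 0\}$, and observes that the $V_k$ are independent sets in $G_B$ covering all vertices; hence $c(i)=\min\{k:i\in V_k\}$ is a proper $d$-colouring of $G_B$. Since $d=\chi(G_B)$ this colouring is optimal, so each $V_k$ must contain a vertex belonging to no other $V_j$ ($j\le d$). The Alice states at those special vertices are then dual to $\{\ket{\varphi_k}\}$, and a short linear-algebra argument shows any further POVM element is redundant and that the $\ket{\varphi_k}$ are orthonormal, yielding (1), (2), (3) directly from the chromatic hypothesis. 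For (4) the paper lets $G$ be the union of the complete graphs on the $V_k$, checks $G_A\le G\le\overline{G_B}$, and asserts that this $G$ is chordal; (5) then follows from Theorem~\ref{mainthm}. The key idea you are missing is to use the POVM to manufacture a colouring of $G_B$ and exploit its optimality, rather than trying to bootstrap from $\eta_+$.
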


\begin{proof}
Let $\{a_k \kb{\varphi_k}{\varphi_k}\}_k$ be a POVM on Alice's system that is the first step in a one-way LOCC protocol to distinguish the elements of $\mathcal{S}$. Let $d = \dim\H_A$ and order the operators so that the first $d$ states in the list $\{ \ket{\varphi_k}\}_{k = 1}^d$ form a basis for $\H_A$. 

Let $V = \{1,2,\ldots n\}$ label the set of vertices of $G_A$. For each $k \le d$, the set of vertices $V_k := \{i: \bk{\varphi_k}{\psi_i^A} \ne 0\}$ corresponds to an independent set in $G_B$. Otherwise Bob would not be able to complete the measurement given outcome $k$. Since $\{\ket{\varphi_k}\}_{k = 1}^d$ spans $\H_A$, $\bigcup_{k = 1}^d V_k = V$. These two facts imply that the map $c(i) := \min\{ k: i \in V_k\}$ is a legal coloring of $G_B$ of size at most $d$. 

Since $d = \chi(G_B)$, $c$ is a minimal coloring and for any $k $, $\bigcup_{ j \ne k} V_j \ne V$. Hence, each $V_k$ contains a vertex that is unique to  $V_k$. Let us label Alice's states so that for each $k \le d$, $\ket{\psi_k^A}$ corresponds to this vertex.  Then for each $k$, $\ket{\psi_k^A}$ is the unique state in the orthogonal complement of $\{ \ket{\varphi_j}: j\le d,  j \ne k\}$. We can use this fact to show that Alice's POVM is in fact a projection onto an orthonormal basis. 

Suppose Alice's POVM contains more than $d$ operators, and choose some $\ket{\varphi_m}$ for $m > d$. We can write this in terms of our previous basis: 
\begin{equation*}
    \ket{\varphi_m} = \sum_{k = 1}^d \beta_k \ket{\varphi_k} .
\end{equation*}
For any $k$ with $\beta_k \ne 0$, the set of states $\{ \ket{\varphi_j}: j = m \mbox{ or } j\le d,  j \ne k\}$ forms a basis of $\H_A$. By repeating the argument above, we can find a vertex associated with $\ket{\varphi_m}$ so that $\ket{\psi_m^A}$ is the unique state in the orthogonal complement of $\{ \ket{\varphi_j}: j\le d,  j \ne k\}$, which means that $\vert \bk{\psi_m^A}{\psi_k^A}\vert = 1$.

Since $\{ \ket{\psi_j^A}\}_{j = 1}^d$ is linearly independent, there can only be one $k$ with $\vert \bk{\psi_m^A}{\psi_k^A}\vert = 1$, which implies that there is only one nonzero $\beta_k$. Thus $\kb{\varphi_m}{\varphi_m} = \kb{\varphi_k}{\varphi_k}$, and the inclusion of $\kb{\varphi_m}{\varphi_m}$ in the POVM is redundant.  This implies that Alice's POVM contains exactly $d$ rank one operators that add to the identity, which implies that they must be mutually orthogonal. 

Finally, we see that for each $k$, $\vert \bk{\varphi_j}{\psi_k^A}\vert  = \delta_{jk}$, which implies that the set $\{\ket{\psi_k^A}\}_{k = 1}^d$ is mutually orthogonal and corresponds to an independent set of size $d$ in $G_A$, hence $\alpha(G_A) \ge d$,  and by construction it corresponds to an independent set in $G_B$. Hence, $d \le \alpha(\overline{G_B}) \le \chi(G_B) = d$, implying equality. The equality of the chromatic number and the clique number of $G_B$ is the definition of a \emph{weakly perfect graph} in the literature.

To show that Alice's measurement is unique: For each $k$, we can look at the set of vertices $\{ i:i \notin \bigcup_{j\ne k} V_j\}$, which must form a clique in $G_A$. Since we have a set of $d$ mutually non-adjacent cliques, it must be that there is a unique state associated with each clique and that Alice's measurement must project onto it.

Finally, we see that for any two neighbors in $G_A$, $v_i\sim v_j$ implies that $\bk{\psi_i^A}{\psi_j^A} \ne 0$ and hence there exists at least one $k$ such that $\bk{\psi_i^A}{\varphi_k}\bk{\varphi_k}{\psi_j^A} \ne 0$, which implies that if $v_i \sim v_j$, there exists $k$ with $i, j \in V_k$. Thus, if we let $G$ be the union of the complete graphs on the $V_k$, then $G$ is a chordal graph with $G_A \le G \le \overline{G_B}$. By Theorem \ref{mainthm}, any representation of $G_A$ and $G_B$ must be distinguishable with one-way LOCC. 
\end{proof}

The theorem gives several conditions that are easy to identify and which preclude one-way LOCC discrimination. In particular, the following is an immediate consequence of conclusion (\ref{WeaklyPerfect}) in the theorem. 

\begin{cor}\label{AlphaLessThanChi} Let $\mathcal{S}$ be a collection of orthonormal product states in $\H_A \ot \H_B$, with associated graphs $G_A$ and $G_B$, such that $\dim\H_A = \chi(G_B)$. 

If $\alpha(\overline{G_B})$ is strictly less than $\chi(G_B)$, then the states cannot be distinguished via one-way LOCC with Alice going first. 

\end{cor}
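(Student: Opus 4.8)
The plan is to argue by contraposition, leaning entirely on conclusion~(\ref{WeaklyPerfect}) of Theorem~\ref{ConverseTheorem}. Note first that the hypotheses of the corollary are verbatim those of the theorem: the same bipartite space $\H_A \ot \H_B$, the same associated graphs $G_A$ and $G_B$, and the same dimension condition $\dim\H_A = \chi(G_B)$. So every conclusion of Theorem~\ref{ConverseTheorem} is available here with no re-derivation needed.

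Now suppose, for contradiction, that the states of $\mathcal{S}$ are distinguishable via one-way LOCC with Alice going first. Then conclusion~(\ref{WeaklyPerfect}) of Theorem~\ref{ConverseTheorem} applies and yields, among other things, the equality $\alpha(\overline{G_B}) = \chi(G_B)$. This directly contradicts the standing hypothesis $\alpha(\overline{G_B}) < \chi(G_B)$. Hence no such one-way LOCC protocol can exist, which is precisely the assertion of the corollary.

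There is essentially no obstacle: the substantive work has already been done in isolating conclusion~(\ref{WeaklyPerfect}) inside Theorem~\ref{ConverseTheorem}, and this corollary is simply the convenient ``easy-to-check sufficient condition for indistinguishability'' repackaging of that part. The only point worth flagging is that, by Lemma~\ref{lemmainequality}, the inequality $\alpha(\overline{G_B}) \le \chi(G_B)$ holds for every graph, so the hypothesis $\alpha(\overline{G_B}) < \chi(G_B)$ is nothing but the \emph{failure} of that inequality to be an equality --- equivalently, the failure of the weak-perfectness condition forced by the theorem --- which makes transparent why this particular numerical gap obstructs one-way discrimination in the minimum-dimension (here $\dim\H_A = \chi(G_B)$) regime.
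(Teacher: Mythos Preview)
Your argument is correct and is exactly the paper's own approach: the corollary is stated as an immediate consequence of conclusion~(\ref{WeaklyPerfect}) in Theorem~\ref{ConverseTheorem}, and you have simply spelled out the contrapositive. Nothing further is needed.
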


A second corollary gives us a biconditional that follows from combining conclusion (\ref{ChordalCondition}) in Theorem \ref{ConverseTheorem} with Theorem \ref{mainthm}.

\begin{cor}\label{BiconditionalCorollary} Let $\mathcal{S}$ be a collection of orthonormal product states in $\H_A \ot \H_B$, with associated graphs $G_A$ and $G_B$, such that $\dim\H_A = \chi(G_B)$. 

The elements of $\mathcal{S}$ are distinguishable via one-way LOCC with Alice going first if and only if there exists a chordal graph $G$ with $G_A \le G \le \overline{G_B}$.
\end{cor}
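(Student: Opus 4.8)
The plan is to recognize that this biconditional simply glues together two results already established in the excerpt, with the dimension hypothesis $\dim\H_A = \chi(G_B)$ doing work only in one direction.

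For the ``if'' direction I would argue as follows. Assume there is a chordal graph $G$ with $G_A \le G \le \overline{G_B}$. Since $\mathcal{S}$ is a collection of orthonormal (hence mutually orthogonal) product states whose Alice graph is $G_A$ and whose Bob graph is $G_B$, the set $\mathcal{S}$ is precisely an instance of ``a collection of $n$ product states having Alice graph $G_A$ and Bob graph $G_B$ with $G_A \le G \le \overline{G_B}$'' for this chordal $G$. Theorem~\ref{mainthm} then applies verbatim and yields that $\mathcal{S}$ is distinguishable via one-way LOCC with Alice going first. This step uses neither the orthonormality beyond the consequence $G_A \le \overline{G_B}$ nor the hypothesis on $\dim\H_A$.

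For the ``only if'' direction I would invoke Theorem~\ref{ConverseTheorem} directly: under the standing hypothesis $\dim\H_A = \chi(G_B)$, if $\mathcal{S}$ is distinguishable via one-way LOCC with Alice going first, then conclusion~(\ref{ChordalCondition}) of that theorem hands us exactly a chordal graph $G$ with $G_A \le G \le \overline{G_B}$. Combining the two directions closes the equivalence.

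There is no genuine obstacle left to overcome here — all of the analytic content lives in the proofs of Theorems~\ref{mainthm} and~\ref{ConverseTheorem} (in particular the matrix-completion input of Lemma~\ref{jfaresult} on the one hand, and the minimal-coloring and POVM-uniqueness analysis behind Theorem~\ref{ConverseTheorem} on the other). The only care required is the routine bookkeeping that the hypotheses of the cited results line up with the present setup: orthonormality of $\mathcal{S}$ forces $G_A \le \overline{G_B}$, so the sandwich condition is consistent, and the dimension constraint $\dim\H_A = \chi(G_B)$ is needed only to invoke the converse theorem while being silent in the forward construction.
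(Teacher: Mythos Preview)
Your proposal is correct and matches the paper's own argument exactly: the paper states the corollary as an immediate consequence of combining conclusion~(\ref{ChordalCondition}) of Theorem~\ref{ConverseTheorem} with Theorem~\ref{mainthm}, precisely as you describe. Your additional remarks about which hypothesis is used in which direction are accurate and add clarity beyond what the paper says explicitly.
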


\subsection{Examples} We present a number of examples motivated by the results of this section, with particular attention paid to the seminal work \cite{bennett1999quantum}. 

\begin{exa}\label{1steg}
    {\rm 
Recall that Example~\ref{Ex:4Cycle in 2D} gives a 4-cycle representation in dimension 2 for which $G_A=\overline{G_B}$ is not chordal, and, in that case, we have $\dim\H_A = \chi(\overline{G_B}) = 2$ and the states are indistinguishable by Corollary \ref{BiconditionalCorollary}. A more interesting graph-theoretical example is the nonlocality without entanglement states from Bennett, et al. \cite{bennett1999quantum}. For this well-known example of 9 states in $\C^3 \ot \C^3$, one can check that the graphs satisfy $G = G_A = \overline{G_B}$, where $G$ is the graph below with vertex states labeled corresponding to \cite{bennett1999quantum}: 

\begin{figure}[H]
\begin{center}
    
 \begin{tikzpicture}[scale=1.3, baseline=(current bounding box.north)]
\draw[thick] (0,0)--(3,0)--(3,1)--(0,1)--(0,0);
\draw[thick] (0,0)--(1,1)--(2,0)--(3,1);
\draw[thick] (0,1)--(1,0)--(2,1)--(3,0);
\draw[thick] (1,0)--(1.5,-1)--(2,0);
\draw[thick] (1,1)--(1.5,-1)--(2,1);
\draw[fill] (0,0) node[anchor=north] {$3$} circle [radius=2pt];
\draw[fill] (1,0) node[anchor=north east] {$9$} circle [radius=2pt];
\draw[fill] (2,0) node[anchor=north west] {$7$} circle [radius=2pt];
\draw[fill] (3,0) node[anchor=north] {$5$} circle [radius=2pt];
\draw[fill] (0,1) node[anchor=south] {$2$} circle [radius=2pt];
\draw[fill] (1,1) node[anchor=south] {$8$} circle [radius=2pt];
\draw[fill] (2,1) node[anchor=south] {$6$} circle [radius=2pt];
\draw[fill] (3,1) node[anchor=south] {$4$} circle [radius=2pt];
\draw[fill] (1.5,-1) node[anchor=north] {$1$} circle [radius=2pt];

\end{tikzpicture}

\end{center}

\end{figure}
This graph clearly has $\alpha(G) = 3$, which is also the dimension of Alice's system, so $\eta_+(G) = 3$ by Lemma~\ref{lemmainequality}. Since it has no simplicial vertices, we can apply Theorem \ref{NoSimplicial} and see immediately that it is not possible to distinguish this basis of $\C^3 \otimes \C^3$ with one-way LOCC. (We note the result in \cite{bennett1999quantum} is stronger, that it is impossible to distinguish the states with full LOCC operations.) 
}
\end{exa}

\begin{exa}\label{SubsetEx}
    {\rm Looking at subsets  of these states can only  make the distinguishability problem easier, and it is shown in \cite{bennett1999quantum} that removing any vertex (except ``9") from the states in Example \ref{1steg} results in coresponding states that are distinguishable with general LOCC measurements. Our argument shows that one-way LOCC is not sufficient to distinguish them. In fact, there exist subsets of 5 of these states that are also not distinguishable with one-way LOCC. Two such sets are given by the states that correspond to the graphs below, with $G_A = \overline{G_B}$ given for both. 

\begin{figure}[H]
\begin{minipage}{.5\textwidth}
\begin{center}
% \hfill
\begin{tikzpicture}[scale=1.3, baseline=(current bounding box.north)]
\draw[thick] (0,1)--(3,1);

\draw[thick] (0,1)--(1,0)--(2,1);
%\draw[fill] (0,0) circle [radius=2pt];
\draw[fill] (1,0) node[anchor=north] {$9$} circle [radius=2pt];
%\draw[fill] (2,0) circle [radius=2pt];
%\draw[fill] (3,0) circle [radius=2pt];
\draw[fill] (0,1) node[anchor=south] {$2$} circle [radius=2pt];
\draw[fill] (1,1) node[anchor=south] {$8$} circle [radius=2pt];
\draw[fill] (2,1) node[anchor=south] {$6$} circle [radius=2pt];
\draw[fill] (3,1) node[anchor=south] {$4$} circle [radius=2pt];
%\draw[fill] (1.5,-1) circle [radius=2pt];

 \end{tikzpicture} 
 
 %~ \hfill 
 \end{center}
 \end{minipage}%
\begin{minipage}{.4\textwidth}
 \begin{center}
 \begin{tikzpicture}[scale=1.3, baseline=(current bounding box.north)]
\draw[thick] (0,1)--(2,1);
\draw[thick] (2,1)--(1,0)--(2,0)--(1,1);
\draw[thick] (0,1)--(1,0)--(2,1);
%\draw[fill] (0,0) circle [radius=2pt];
\draw[fill] (1,0) node[anchor=north] {$9$} circle [radius=2pt];
\draw[fill] (2,0) node[anchor=north] {$7$} circle [radius=2pt];
%\draw[fill] (3,0) circle [radius=2pt];
\draw[fill] (0,1) node[anchor=south] {$2$} circle [radius=2pt];
\draw[fill] (1,1) node[anchor=south] {$8$} circle [radius=2pt];
\draw[fill] (2,1) node[anchor=south] {$6$} circle [radius=2pt];
%\draw[fill] (3,1) circle [radius=2pt];
%\draw[fill] (1.5,-1) circle [radius=2pt];

 \end{tikzpicture}~~~~~~~~~~~~~~~~~~~~~~~
 \end{center}
 \end{minipage}
\end{figure}

It is immediate in both cases that $\chi(G_B) = 3$. This, together with the fact that $\dim \H_A = 3$ and that neither of these graphs is chordal, allows us to apply Corollary~\ref{BiconditionalCorollary} and immediately conclude that neither of these sets can be distinguished with one-way LOCC. (On the other hand, in each case $G_B$ is chordal, so the states \emph{can} be distinguished if we alter our convention and allow Bob to measure first.)
}
\end{exa}

\begin{exa}\label{UPB} {\rm
The size of our set in Example \ref{SubsetEx} is potentially significant, as 5 is the minimum size of an unextendible product basis in $\C^3\ot \C^3$, and no unextendible product basis can be distinguished with separable measurements, which includes one-way LOCC. 

Following shortly on the work in \cite{bennett1999quantum}, there was work providing fundamental examples of unextendible product bases.  The authors in \cite{bennett1999unextendible,divincenzo2003unextendible} built examples of unextendible product bases and demonstrated the usefulness of representing product states with graphs. Their {\bf Tiles} is a set of 5 states in $\C^3 \ot \C^3$ corresponding to the states $\{2,4,6,8\}$ in Example \ref{SubsetEx} plus a stopper state that is adjacent to $2$ and $4$. The corresponding graph is a $5$-cycle, which cannot be distinguished with one-way LOCC by applying Corollary \ref{AlphaLessThanChi} and the logic in Lemma \ref{cycleLemma}. Unextendible product bases are not distinguishable with separable operations, so these LOCC results are not new, but they show how the various results may be applied in interesting cases.    }
\end{exa}
 
\begin{exa}
    {\rm 
The following is a generalization of Example \ref{1steg}, representing a set of $4d-3$ product states in $\mathbb C^d\ot \mathbb C^d$. For $d \ge 3$, consider the graph $B_d$ represented below. Each circle represents a complete graph or its complement, and heavy lines indicate that every vertex inside one circle is adjacent to every vertex inside the other. Example~\ref{1steg} is the case $d = 3$.

\begin{figure}[H]
\begin{center}
\begin{tikzpicture}[scale=1.3, baseline=(current bounding box.north)]
%\draw[very thick] (0,-.1)--(4.5,-.1);
%\draw[very thick] (0,.1)--(4.5,.1);
\draw[very thick] (0,0)--(4.5,0);
\draw[very thick]  (1.5,0)--(2.25,-1)--(3,0);
\foreach \x in {0,4.5}
   \draw[fill=white] (\x,0) circle [radius=0.5] node {$K_{d-1}$} ; 
\foreach \x in {1.5,3}
   \draw[fill=white] (\x,0) circle [radius=0.5] node {$\overline{K_{d-1}}$} ; 
\draw[fill] (2.25,-1) circle [radius=2pt] ; 

\end{tikzpicture}
\end{center}
\end{figure}

In general, this graph can be realized with the following states: Let $F$ be the $(d-1)\times (d-1)$ Fourier matrix and define $\tilde{F} = F \oplus 1$ to be the extension of $F$ to a $d$-dimensional space. Finally, let $X$ be a cyclic permutation on the standard basis in dimension $d$: $X = \sum_{k = 0}^{d-1} \kb{k+1}{k} \mod{d}$.
Then we can define the following states, where $k$ ranges from $0$ to $(d-2)$: 
\begin{align*}
    \ket{\psi_{1,k}} &= \tilde{F}\ket{0} \ot \ket{k} &\quad
    \ket{\psi_{3,k}} &= X\tilde{F}\ket{k} \ot \tilde{F}\ket{0}   \\
    \ket{\psi_{2,k}} &= \ket{k} \ot \ket{d-1} &\quad
    \ket{\psi_{4,k}} &=  \ket{d-1} \ot X\tilde{F}\ket{k}  \\
    \ket{\psi_0} &= \tilde{F}\ket{1} \ot\tilde{F}\ket{1} .
\end{align*}

It is straightforward to show that $\overline{B_d}$ is graph isomorphic to $B_d$ and that $\alpha(B_d) = \eta_+(B_d) = \chi(\overline{B_d}) = d$. This implies that our representation of $B_d$ in dimension $d$ is the smallest possible. Suppose Alice and Bob are working with a two-qudit system with a product state graph  representation $G_A = G_B = B_d$. Since $B_d$ contains no simplicial vertices, Theorem \ref{ConverseTheorem} implies that  one-way LOCC discrimination is not possible if Alice goes first; and the fact that the graph is self-complementary implies that it is not possible if Bob goes first, either.

It is also true that any subgraph  of $B_d$ that has $\alpha = d$ will continue to be indistinguishable with Alice going first, corresponding to this graph, with $ {a,b,c} \in [1,d-1]$:

\begin{figure}[H]
\begin{center}
\begin{tikzpicture}[scale=1.3, baseline=(current bounding box.north)]
%\draw[very thick] (0,-.1)--(4.5,-.1);
%\draw[very thick] (0,.1)--(4.5,.1);
\draw[very thick] (0,0)--(4.5,0);
%\draw[very thick]  (1.5,0)--(2.25,-1)--(3,0);
\draw[fill=white] (0,0) circle [radius=0.5] node {$K_{a}$} ; 
   \draw[fill=white] (4.5,0) circle [radius=0.5] node {$K_{c}$} ; 

   \draw[fill=white] (1.5,0) circle [radius=0.5] node {$\overline{K_{d-1}}$} ; 
   \draw[fill=white] (3,0) circle [radius=0.5] node {$\overline{K_{b}}$} ; 
%\draw[fill] (0,0) circle [radius=2pt] ; 

\end{tikzpicture}
\end{center}
\end{figure}

Finally, for odd values of $d$, we can extend the original set to a complete basis of $\mathbb C^d\otimes \mathbb C^d$ by defining $G_1$ to be a single point $K_1$ and then recursively letting $G_d$ be given by:

\begin{figure}[H]
\begin{center}
 \begin{tikzpicture}[scale=1.3, baseline=(current bounding box.north)]
%\draw[very thick] (0,-.1)--(4.5,-.1);
%\draw[very thick] (0,.1)--(4.5,.1);
\draw[very thick] (0,0)--(4.5,0);
\draw[very thick]  (1.5,0)--(2.25,-1)--(3,0);
\foreach \x in {0,4.5}
   \draw[fill=white] (\x,0) circle [radius=0.5] node {$K_{d-1}$} ; 
\foreach \x in {1.5,3}
   \draw[fill=white] (\x,0) circle [radius=0.5] node {$\overline{K_{d-1}}$} ; 
\node[left] at (-1,0) {$G_d=$};
\draw[fill=white] (2.25,-1) circle [radius=0.5] node {$G_{d-2}$} ; 

\end{tikzpicture}
\end{center}
\end{figure}

This is still self-complementary with minimum rank $d$ and $\alpha(G_d) = d$, so our theorems apply. 

We can also represent these states using a domino diagram, as in, e.g., \cite{bennett1999quantum, cohen2017general}. The diagram below is for $d = 5$, and gives motivation for calling this a Bullseye Basis. The numbers show how many states correspond to each rectangle: 

\begin{figure}[H]
\begin{center}
 \begin{tikzpicture}[scale=0.8, baseline=(current bounding box.north)]
\draw (0,0)--(5,0)--(5,5)--(0,5)--cycle; 
\draw[very thick]  (1,0)--(5,0)--(5,1)--(1,1)--cycle;
\node at (3,.5) {$4$};
\node at (3,1.5) {$2$};
\draw[very thick]  (0,4)--(4,4)--(4,5)--(0,5)--cycle;
\draw[very thick] (4,5)--(4,1);
\node at (2,4.5) {$4$};
\node at (2,3.5) {$2$};
\node at (0.5,2) {$4$}; \node at (1.5,2) {$2$};\node at (3.5,3) {$2$};\node at (4.5,3) {$4$};
\node at (2.5,2.5) {$1$};
\draw[very thick] (1,0)--(1,4);

\draw[very thick] (2,1)--(2,3);
\draw[very thick] (3,4)--(3,2);
\draw[very thick] (1,3)--(3,3);
\draw[very thick] (2,2)--(4,2);

\end{tikzpicture}
\end{center}
\end{figure}

}
\end{exa}

\begin{exa}
{\rm 
    One simple example for which $G_A \lneq \overline{G_B}$ has $G_B = C_5$, a 5-cycle, while $G_A = P_3 + P_2$, the union of an edge and a path of length 2. One can see that $G_A \le \overline{G_B}$ and $\chi(G_B) = 3$. 

    The representations of these graphs in $\C^3 \ot \C^3$ is fairly prescribed, for instance by the following 5 states: 
\begin{align*}
    \ket{\psi_{1}} &= \ket{0} \ot \ket{0} &\quad
    \ket{\psi_{2}} &= (\ket{0} + \ket{1}) \ot \ket{2}   \\
    \ket{\psi_{3}} &= \ket{1} \ot (\ket{0} + \ket{1}) &\quad
    \ket{\psi_{4}} &=  \ket{2} \ot(\ket{0} - \ket{1} + \ket{2})  \\
    &  &\quad \ket{\psi_5} &= \ket{2} \ot (\ket{1} + \ket{2}) .
\end{align*}
Theorem \ref{ConverseTheorem} states that any measurement to distinguish these sets starting with Alice must project onto an orthogonal basis, in this case the standard basis; and  Corollary \ref{BiconditionalCorollary} shows that this must be possible, since $G_A$ is chordal. 
}
\end{exa}

\section{Conclusions and Future Work}

We have built on our work from \cite{kribs2020vector}, extending the ways in which the areas of graph theory, orthogonal representations, and LOCC distinguishability can inform and enrich each other. In particular, in Section~\ref{section3} we established a characterization of chordal graphs in terms of one-way LOCC distinguishability, and we provided a number of applications and examples. In Section~\ref{LOCC and graph reps}, we reversed this, showing how graph parameters imply limitations on one-way LOCC measurements and providing graph-driven examples. 

Many open questions remain, including consideration of how other families of graphs might be characterized in terms of quantum operations; and how to algorithmically build one-way LOCC measurements from the corresponding graphs. It is also apparent that these ideas could be extended to two-way LOCC measurements and to sets of multipartite product states. This work is an invitation for researchers in disparate areas of mathematics to engage with these quantum-inspired questions. 

\strut

{\noindent}{\it Acknowledgements.} D.W.K. was partly supported by NSERC Discovery Grant 400160. R.P. was partly supported by NSERC Discovery Grant 400550.

\bibliographystyle{plain}

\bibliography{KMNPBibfile}

\end{document}